\documentclass[12pt]{article}
\usepackage{amsmath, amsfonts, amssymb, amsthm}
\usepackage{booktabs, graphicx, natbib, geometry, tikz}
\usetikzlibrary{arrows.meta, positioning, decorations.markings}
\usepackage[utf8]{inputenc}
\newtheorem{theorem}{Theorem}
\newtheorem{lemma}{Lemma}
\newtheorem{proposition}{Proposition}

\title{Tractable Estimation of the Money Pump Index: A Comment}
\author{Gavin Kader}
\date{\today}

\begin{document}

\maketitle

\begin{abstract}
\noindent The Money Pump Index (MPI) of \citet{echenique2011money} measures the severity of consumer irrationality, but computing the exact mean and median MPI over all revealed preference cycles is NP-hard \citep{smeulders2013note}. Existing solutions rely on heuristic proxies, such as evaluating only shorter cycles or bounding the MPI. By framing revealed preferences as a directed graph, this paper projects choice violations onto fundamental cycle bases, which are minimal sets of linearly independent cycles that span the graph's entire cycle space. This yields computationally tractable estimators for the mean and median MPI that are asymptotically equivalent to the original MPI. Applying this methodology to the scanner dataset analyzed by \citet{echenique2011money} and \citet{smeulders2013note}, the proposed estimators compute quickly and with negligible small-sample bias.
\end{abstract}

\vspace{0.5em}
\noindent\textbf{Keywords:} Revealed Preference, Money Pump Index, Fundamental Cycle Basis, Consumer Rationality\\
\noindent\textbf{JEL Classification:} C60, C63, D11, D12.

\newpage
\section{Introduction}

Irrational behavior makes consumers vulnerable because it allows arbitrageurs to extract money from them. \citet{echenique2011money} operationalized this concept by proposing the Money Pump Index (MPI), which measures the fraction of a consumer's budget an arbitrageur could extract by exploiting violations of the generalized axiom of revealed preference (GARP). When a consumer's choices violate GARP, there are typically multiple overlapping choice cycles that generate money pumps. A natural way to summarize the severity of a consumer's overall irrationality is to take the mean or median value of the money pumps across all such cycles.

However, calculating this mean or median severity is computationally intractable for large datasets. For example, consider a dataset of prices and bundles $D = \{(\mathbf{p}^t, \mathbf{x}^t)\}_{t=1}^n$ with $n=25$ observations. The number of potential cycles of length $k$ is $\binom{n}{k}(k-1)!$. For instance, if we just look at cycles of length $k=10$, there are $\binom{25}{10} \times 9! \approx 1.18 \times 10^{12}$ potential cycles. Summing across all possible cycle lengths results in an inordinately large number of cycles. Indeed, \citet{smeulders2013note} formalize this computational challenge by proving that computing the exact mean MPI (OA-MPI) is an NP-hard problem. Consequently, researchers analyzing datasets with many observations often use approximations to the true mean and median MPI by considering only shorter cycles, or using instead extreme values like the maximum/minimum MPI \citep{smeulders2013note}.

I examine whether we can tractably measure the mean and median MPI without imposing arbitrary cycle length restrictions. I overcome the $n!$ scaling associated with evaluating all cycles (i.e., the $\mathcal{O}(n!)$ complexity) by leveraging graph theory. Specifically, a dataset $D$ can be reformulated as a directed graph where observations are nodes and revealed preference relations are directed edges, meaning finding cycles in the graph is exactly equivalent to finding cycles in preferences. Adopting this framework, I propose a new approach that projects choice violations onto the graph's \textit{fundamental cycle basis}---a minimal, linearly independent set of cycles that spans the entire cycle space of the graph.

I introduce the Basis Average MPI, an equivalent formulation of the true mean MPI that aggregates over the graph's fundamental cycle bases rather than enumerating all individual cycles. While the total number of possible cycle bases remains large, this reformulation gives rise to an efficient estimator. By randomly sampling over the space of fundamental cycle bases, the estimator approximates the true mean MPI in polynomial time ($\mathcal{O}(K n^2)$, where $K$ is the sample size and $n$ is the number of observations). I formally establish the consistency of this estimator and provide theoretical bounds on its finite-sample empirical bias, which proves negligible in practice. Analogously, I introduce the Basis Percentile MPI and a corresponding consistent estimator for the median MPI (and any arbitrary percentile). Finally, I apply this methodology to the Stanford Basket Dataset as in \citet{echenique2011money} and \citet{smeulders2013note}. Across all 396 GARP-violating households, the estimators compute in under 40 seconds, achieving near-zero empirical bias (mean bias $< 0.0001$). Furthermore, while the estimated mean MPI closely aligns with the short-cycle approximation of \citet{echenique2011money}, the basis median reveals that truncating at cycle length 4 slightly overestimates the true median severity of irrationality.

The remainder of the paper proceeds as follows. Section \ref{sec:garp_mpi} formally links the MPI to the directed graph representation. Section \ref{sec:ensemble} constructs the fundamental cycle bases and proves the consistency of the random-path estimators. Section \ref{sec:empirical} validates the estimators on the Stanford Basket Dataset, demonstrating near-zero empirical bias. All proofs are relegated to the appendix.

\section{GARP and the MPI} \label{sec:garp_mpi}

\textbf{GARP}: Suppose we have a dataset $D = \{(\mathbf{p}^t, \mathbf{x}^t)\}_{t=1}^n$ of $n$ observations, where $\mathbf{p}^t \in \mathbb{R}_{++}^L$ is a positive price vector and $\mathbf{x}^t \in \mathbb{R}_+^L$ is a non-negative bundle. For any two observations $i$ and $j$, if and only if $\mathbf{p}^i \cdot \mathbf{x}^i \ge \mathbf{p}^i \cdot \mathbf{x}^j$, we say that $\mathbf{x}^i$ is \textit{directly revealed preferred} to $\mathbf{x}^j$, denoted as $\mathbf{x}^i R_0 \mathbf{x}^j$. The indirect revealed preference relation, denoted by $R$, is the transitive closure of $R_0$. Furthermore, $\mathbf{x}^i$ is \textit{strictly directly revealed preferred} to $\mathbf{x}^j$ if and only if $\mathbf{p}^i \cdot \mathbf{x}^i > \mathbf{p}^i \cdot \mathbf{x}^j$, denoted as $\mathbf{x}^i P_0 \mathbf{x}^j$. A dataset $D$ satisfies the generalized axiom of revealed preference (GARP) if, for any sequence of observations $(k_1, k_2, \dots, k_m)$ such that $\mathbf{x}^{k_1} R_0 \mathbf{x}^{k_2}$, $\mathbf{x}^{k_2} R_0 \mathbf{x}^{k_3}$, $\dots$, $\mathbf{x}^{k_{m-1}} R_0 \mathbf{x}^{k_m}$, it is not the case that $\mathbf{x}^{k_m} P_0 \mathbf{x}^{k_1}$.

As GARP is a binary test, we frequently use indices to quantify the severity of irrationality. Alongside the MPI \citep{echenique2011money}, the literature has proposed numerous such measures, such as Afriat's Efficiency Index \citep{afriat1967construction}, Varian's Index \citep{varian1990goodness}, the Houtman-Maks Index \citep{houtman1985determining}, the Minimum Cost Index \citep{dean2016measuring}, and the Swaps Index \citep{apesteguia2015measure}.\footnote{Recently, \cite{demuynck2023computing} have introduced Mixed-Integer Linear Programming (MILP) formulations for the average Varian Index, the Houtman-Maks Index, and the Minimum Cost Index. MILP solvers are typically very fast, overcoming computational problems normally associated with those indices.}

\textbf{MPI}: The fundamental idea behind the MPI is one of ``exploiting the vulnerability'' of irrational consumers by measuring the amount of money an ``arbitrageur'' could extract from them. If a consumer (represented by their dataset) violates GARP, it is because they exhibit a revealed preference cycle from their choices. For example, suppose observations $1$ and $2$ form a 2-cycle (cycle of length 2) such that $\mathbf{p}^1 \cdot \mathbf{x}^1 \ge \mathbf{p}^1 \cdot \mathbf{x}^2$ and $\mathbf{p}^2 \cdot \mathbf{x}^2 > \mathbf{p}^2 \cdot \mathbf{x}^1$, i.e., they chose such that $\mathbf{x}^1 R_0 \mathbf{x}^2$, and $\mathbf{x}^2 P_0 \mathbf{x}^1$.

As the consumer chose $\mathbf{x}^1$ at prices $\mathbf{p}^1$ when $\mathbf{x}^2$ was affordable, they implicitly value ``upgrading'' from $\mathbf{x}^2$ to $\mathbf{x}^1$ at a minimum of $\mathbf{p}^1 \cdot (\mathbf{x}^1 - \mathbf{x}^2)$. Likewise, at prices $\mathbf{p}^2$, the minimum value of upgrading from $\mathbf{x}^1$ to $\mathbf{x}^2$ is $\mathbf{p}^2 \cdot (\mathbf{x}^2 - \mathbf{x}^1)$. As described, an arbitrageur makes a profit by providing this ``exploitative upgrade service'' to the consumer, with arbitrageur profit being $\mathbf{p}^1 \cdot (\mathbf{x}^1 - \mathbf{x}^2) + \mathbf{p}^2 \cdot (\mathbf{x}^2 - \mathbf{x}^1)$, while still leaving the consumer with their preferred bundles. If it were not the case that $\mathbf{x}^2 P_0 \mathbf{x}^1$, the arbitrageur is no longer guaranteed to make a profit (i.e., there is no irrationality to exploit). 

Generalizing to an arbitrary preference cycle $C = (k_1, k_2, \ldots, k_m)$ of length $m$, we define $\text{MPI}_D(C)$ which measures the arbitrageur's total money pump profit relative to the consumer's total expenditure within that cycle:
\begin{equation*}
    \text{MPI}_D(C) = \frac{\sum_{i=1}^m \mathbf{p}^{k_i} \cdot (\mathbf{x}^{k_i} - \mathbf{x}^{k_{i+1}})}{\sum_{i=1}^m \mathbf{p}^{k_i} \cdot \mathbf{x}^{k_i}} = 1 - \frac{\sum_{i=1}^m \mathbf{p}^{k_i} \cdot \mathbf{x}^{k_{i+1}}}{\sum_{i=1}^m \mathbf{p}^{k_i} \cdot \mathbf{x}^{k_i}},
\end{equation*}
where $k_{m+1} = k_1$. For a given cycle $C$, this captures the proportion of the total cycle expenditure an arbitrageur could extract. To measure the MPI for an entire dataset $D$, \citet{echenique2011money} recommend the mean (denoted $\text{OA-MPI}_D$) or median (denoted $\text{OP}_{50}\text{-MPI}_D$, or more generally any $\alpha$-percentile $\text{OP}_\alpha\text{-MPI}_D$) of $\text{MPI}_D(C)$ over all cycles within $D$, while \citet{smeulders2013note} provide an algorithm for computing the maximum/minimum of $\text{MPI}_D(C)$ over all cycles (providing informative bounds on the OA-MPI).

For tractability with the following graph-theoretic approach, we use a normalized dataset $\tilde{D} = \{(\tilde{\mathbf{p}}^t, \mathbf{x}^t)\}_{t=1}^n$, where per-period prices are adjusted to $\tilde{\mathbf{p}}^t = \mathbf{p}^t / (\mathbf{p}^t \cdot \mathbf{x}^t)$ so that expenditures are normalized to 1 ($\tilde{\mathbf{p}}^t \cdot \mathbf{x}^t = 1$). We define an $n\times n$ matrix $X$ such that for any two observations, $k_i$ and $k_j$, the $(k_i,k_j)$-th entry of $X$ is $X_{k_i,k_j} = 1 - \tilde{\mathbf{p}}^{k_i} \cdot \mathbf{x}^{k_j}$.\footnote{For any two observations with $k_i\neq k_j$ in $\tilde{D}$, $X_{k_i,k_j} > 0$ indicates a strict preference violation.} With this matrix $X$, we have now defined a directed graph $G_{\tilde{D}}$ where each node is an observation. The weight on each directed edge is given by $X_{k_i,k_j}$, which is referred to as the money pump cost associated with observations $(k_i, k_j)$. This represents the exact fraction of the consumer's period-$k_i$ budget that the arbitrageur extracts by swapping the consumer's bundle $\mathbf{x}^{k_j}$ for bundle $\mathbf{x}^{k_i}$ at prices $\mathbf{p}^{k_i}$. We reformulate $\text{MPI}_D(C)$ on the normalized dataset $\tilde{D}$ for any cycle $C = (k_1, \dots, k_m)$ as:\footnote{When expenditures in $D$ are constant across periods, $\text{MPI}_D(C)$ and $\text{MPI}_{\tilde{D}}(C)$ are identical. When expenditures fluctuate, normalization induces a small divergence. In Appendix \ref{sec:appendix_normalization}, Lemma \ref{lem:equivalence} formally proves that the maximum possible difference is strictly bounded by $X_{\max} \cdot \left( \frac{\sqrt{r} - 1}{\sqrt{r} + 1} \right)$, where $r$ is the ratio between the highest and lowest expenditure in cycle $C$, and $X_{\max}$ is the largest single violation in the cycle. Applying this to our empirical application (Section \ref{sec:empirical}) in the most conservative way possible, evaluating the worst-case cycle for each household exhibiting two-cycle violations, $r$ has a median of only $1.101$ (mean $1.133$), yielding a median theoretical bound of $0.0021$ (mean $0.0046$) and an actual worst-case difference $|\text{MPI}_D(C) - \text{MPI}_{\tilde{D}}(C)|$ of only $0.0013$ (mean $0.0032$; see Table \ref{tab:normalization_stats} in Appendix \ref{sec:appendix_normalization}). We therefore proceed with the normalized dataset $\tilde{D}$ for the remainder of the paper.}
\begin{equation*}
    \text{MPI}_{\tilde{D}}(C) = \frac{1}{|C|} \sum_{i=1}^{|C|} X_{k_i, k_{i+1}}.
\end{equation*}

\noindent\textbf{Example:} To solidify understanding, consider a consumer with dataset $\mathcal{D}$ comprising three observations (over three distinct goods), with $\mathbf{x}^1 = (1, 0, 0)$, $\mathbf{x}^2 = (0, 1, 0)$, and $\mathbf{x}^3 = (0, 0, 1)$, at prices $\mathbf{p}^1 = (1, 0.8, 1.5)$, $\mathbf{p}^2 = (0.9, 1, 0.7)$, and $\mathbf{p}^3 = (0.9, 0.9, 1)$. This consumer exhibits exactly 3 GARP-violating preference cycles: i) $\mathbf{x}^1 P_0 \mathbf{x}^2$, yet $\mathbf{x}^2 P_0 \mathbf{x}^1$; ii) $\mathbf{x}^2 P_0 \mathbf{x}^3$, yet $\mathbf{x}^3 P_0 \mathbf{x}^2$; iii) $\mathbf{x}^1 P_0 \mathbf{x}^2$, $\mathbf{x}^2 P_0 \mathbf{x}^3$, yet $\mathbf{x}^3 P_0 \mathbf{x}^1$.

Calculating all the associated money pump costs, we have:
\begin{align*}
    X_{1,2} &= 1 - 0.80 = 0.20, & X_{2,1} &= 1 - 0.90 = 0.10, \\[0.85em]
    X_{2,3} &= 1 - 0.70 = 0.30, & X_{3,2} &= 1 - 0.90 = 0.10, \\[0.85em]
    X_{1,3} &= 1 - 1.50 = -0.50, & X_{3,1} &= 1 - 0.90 = 0.10.
\end{align*}

Let $C_{k_1, k_2, \dots, k_j}$ denote the choice cycle associated with the sequence of observations $k_1, k_2, \dots, k_j,k_1$ (with a $j$-cycle referring to a generic cycle of length $j$). We can then calculate the associated MPI for each of the 5 potential structural cycles. As an arbitrageur will not trade a cycle with a negative payout, we enforce $\text{MPI}(C) = \max(0, \frac{1}{|C|}\sum X)$:
\begin{align*}
    \text{MPI}_{\mathcal{D}}(C_{1,2}) &= \frac{1}{2}(0.2 + 0.1) = 0.15,  \\[0.85em]
    \text{MPI}_{\mathcal{D}}(C_{2,3}) &= \frac{1}{2}(0.3 + 0.1) = 0.20, \\[0.85em]
    \text{MPI}_{\mathcal{D}}(C_{1,2,3}) &= \frac{1}{3}(0.2 + 0.3 + 0.1) = 0.20. 
\end{align*}
The standard OA-MPI, as recommended by \citet{echenique2011money}, measures the global severity of irrationality by averaging the money pump costs exclusively over the valid GARP-violating cycles in the dataset. As exactly 3 of the 5 potential cycles are strict violations, the OA-MPI evaluates to: 
\begin{align*}
    \text{OA-MPI}_{\mathcal{D}} &= \frac{1}{3} (0.150 + 0.200 + 0.200) = 0.183.
\end{align*}

For a dataset with $n$ observations, the total number of distinct choice cycles is given by $\sum_{j=2}^{n} \binom{n}{j} (j-1)!$, which clearly explodes as $n$ increases (with $\mathcal{O}(n!)$ complexity). Consequently, evaluating the average across all possible cycles becomes practically impossible for larger datasets. As shown by \citet{smeulders2013note}, computing the true mean (or median) MPI is an NP-hard problem, meaning that there can only exist a polynomial time algorithm for computation if $\text{P}=\text{NP}$.

\section{Basis MPI} \label{sec:ensemble}

\noindent\textbf{Example (continued):} I show that we can select a specific subset of just four cycles---referred to as a \textit{fundamental cycle basis}---to construct all other cycles. To see this, we use a directed \textit{Hamiltonian path} (hereafter, \textit{H-path}), which is a directed path that visits every node in a graph precisely once. By ordering the vertices along this path, we can classify all other edges in the graph relative to this sequence to form a cycle basis.

Consider the path sequence $l_1: 1 \to 2 \to 3$. We extract its fundamental cycle basis, denoted $\mathcal{B}_{l_1}$, as follows:
\begin{enumerate}
    \item \textbf{Contiguous Subsegments:} A subsegment is a contiguous sequence of edges directly along the H-path. We take all possible contiguous subsegments of the path and close each one using a single \textit{back-edge} (an edge that points backwards against the path sequence) to form a cycle. For example, the subsegment $(1 \to 2)$ is closed by the back-edge $2\to1$ yielding $C_{1,2}$; $(2 \to 3)$ is closed by $3\to2$ yielding $C_{2,3}$; and the full path $(1 \to 2 \to 3)$ is closed by $3\to1$ yielding $C_{1,2,3}$.
    \item \textbf{Skip-Edge 2-Cycles:} A \textit{skip-edge} is an edge that skips ahead along the H-path, jumping over intermediate vertices. We take any skip-edges along the path and close them using a single back-edge to form a 2-cycle. In our path, the only forward skip-edge is $(1 \to 3)$, which we close with the back-edge $3\to1$ yielding $C_{1,3}$.
\end{enumerate}
This yields the fundamental cycle basis $\mathcal{B}_{l_1} = \{C_{1,2}, C_{2,3}, C_{1,2,3}, C_{1,3}\}$ as shown in Figure \ref{fig:permutation_basis}.

\begin{figure}[ht]
    \centering
    \resizebox{\textwidth}{!}{\begin{tikzpicture}[
    vertex/.style={circle, draw, minimum size=8mm, font=\large},
    treeedge/.style={->, ultra thick, >=latex},
    inactiveedge/.style={->, thick, >=latex, dashed, gray!50},
    cycleedge/.style={->, ultra thick, >=latex, dashed, color=blue},
    skipedge/.style={->, ultra thick, >=latex, postaction={decorate, decoration={markings, mark=at position 0.33 with {\arrow{latex}}, mark=at position 0.67 with {\arrow{latex}}}}},
    box/.style={gray, thick, rounded corners}
]

% Hamiltonian path at top
\begin{scope}[xshift=5.5cm, yshift=2.5cm]
    \node[anchor=east] at (-0.5, 0) {H-Path Sequence:};
    \node[vertex] (v1) at (0, 0) {1};
    \node[vertex] (v2) at (2.5, 0) {2};
    \node[vertex] (v3) at (5, 0) {3};
    \draw[treeedge] (v1) -- (v2);
    \draw[treeedge] (v2) -- (v3);
\end{scope}

% 1. C_12
\begin{scope}[xshift=0cm, yshift=0cm]
    \draw[box] (-0.75, 1.35) rectangle (5.75, -1.35);
    \node[vertex] (v1) at (0, 0) {1};
    \node[vertex] (v2) at (2.5, 0) {2};
    \node[vertex] (v3) at (5, 0) {3};
    \draw[treeedge] (v1) -- (v2);
    \draw[inactiveedge] (v2) -- (v3);
    \draw[cycleedge] (v2) to[bend left=40] (v1);
    \node at (2.5, -1.8) {\textbf{1. $C_{12}$}};
\end{scope}

% 2. C_23
\begin{scope}[xshift=7cm, yshift=0cm]
    \draw[box] (-0.75, 1.35) rectangle (5.75, -1.35);
    \node[vertex] (v1) at (0, 0) {1};
    \node[vertex] (v2) at (2.5, 0) {2};
    \node[vertex] (v3) at (5, 0) {3};
    \draw[inactiveedge] (v1) -- (v2);
    \draw[treeedge] (v2) -- (v3);
    \draw[cycleedge] (v3) to[bend left=40] (v2);
    \node at (2.5, -1.8) {\textbf{2. $C_{23}$}};
\end{scope}

% 3. C_123
\begin{scope}[xshift=0cm, yshift=-3.5cm]
    \draw[box] (-0.75, 1.35) rectangle (5.75, -1.35);
    \node[vertex] (v1) at (0, 0) {1};
    \node[vertex] (v2) at (2.5, 0) {2};
    \node[vertex] (v3) at (5, 0) {3};
    \draw[treeedge] (v1) -- (v2);
    \draw[treeedge] (v2) -- (v3);
    \draw[cycleedge] (v3) to[bend left=30] (v1);
    \node at (2.5, -1.8) {\textbf{3. $C_{123}$}};
\end{scope}

% 4. C_13
\begin{scope}[xshift=7cm, yshift=-3.5cm]
    \draw[box] (-0.75, 1.35) rectangle (5.75, -1.35);
    \node[vertex] (v1) at (0, 0) {1};
    \node[vertex] (v2) at (2.5, 0) {2};
    \node[vertex] (v3) at (5, 0) {3};
    \draw[inactiveedge] (v1) -- (v2);
    \draw[inactiveedge] (v2) -- (v3);
    \draw[skipedge] (v1) to[bend left=40] (v3);
    \draw[cycleedge] (v3) to[bend left=40] (v1);
    \node at (2.5, -1.8) {\textbf{4. $C_{13}$}};
\end{scope}

% Legend
\begin{scope}[xshift=13.5cm, yshift=-1.75cm, scale=0.9, transform shape]
    
    \draw[treeedge] (0.2, 1.2) -- (1.2, 1.2);
    \node[anchor=west, font=\small] at (1.2, 1.2) {Path subsegment};
    
    \draw[inactiveedge] (0.2, 0.4) -- (1.2, 0.4);
    \node[anchor=west, font=\small, text=black] at (1.2, 0.4) {Inactive edge};
    
    \draw[skipedge] (0.2, -0.4) -- (1.2, -0.4);
    \node[anchor=west, font=\small] at (1.2, -0.4) {Skip-edge};
    
    \draw[cycleedge] (0.2, -1.2) -- (1.2, -1.2);
    \node[anchor=west, font=\small, text=black] at (1.2, -1.2) {Back-edge};
\end{scope}

\end{tikzpicture}}
    \caption{Extracting the cycle basis from the H-path $1 \to 2 \to 3$. Each fundamental cycle is formed by closing a forward path (or skip-edge) with exactly one back-edge.}
    \label{fig:permutation_basis}
\end{figure}

As in our revealed preference setting with a complete directed graph, there are exactly $n!$ distinct H-paths, and thus $n!$ such cycle bases that can be generated. For our 3-observation example, there are $3! = 6$ paths, yielding 6 cycle bases as follows:
\begin{align*}
    l_1 (1\to 2\to 3): \mathcal{B}_{l_1} &= \{C_{1,2}, C_{2,3}, C_{1,2,3}, C_{1,3}\}, \\[0.85em]
    l_2 (1\to 3\to 2): \mathcal{B}_{l_2} &= \{C_{1,3}, C_{3,2}, C_{1,3,2}, C_{1,2}\}, \\[0.85em]
    l_3 (2\to 1\to 3): \mathcal{B}_{l_3} &= \{C_{2,1}, C_{1,3}, C_{1,3,2}, C_{2,3}\}, \\[0.85em]
    l_4 (2\to 3\to 1): \mathcal{B}_{l_4} &= \{C_{2,3}, C_{3,1}, C_{1,2,3}, C_{2,1}\}, \\[0.85em]
    l_5 (3\to 1\to 2): \mathcal{B}_{l_5} &= \{C_{3,1}, C_{1,2}, C_{1,2,3}, C_{3,2}\}, \\[0.85em]
    l_6 (3\to 2\to 1): \mathcal{B}_{l_6} &= \{C_{3,2}, C_{2,1}, C_{1,3,2}, C_{3,1}\}.  
\end{align*}

As a cycle basis spans the entire cycle space, any cycle not in the basis can be constructed through a linear combination of basis cycles. For example, Figure \ref{fig:linear_combination} illustrates how $C_{1,3,2}$ can be constructed using cycles strictly from $\mathcal{B}_{l_1}$ via the cycle formed from the linear combination $C_{1,3,2} = C_{1,3} + C_{1,2} + C_{2,3} - C_{1,2,3}$.\footnote{For 2-cycles, their ordering is unique up to cyclic permutation (i.e., interchangeable) meaning $C_{1,2} \equiv C_{2,1}$, $C_{1,3} \equiv C_{3,1}$, $C_{2,3} \equiv C_{3,2}$, and thus their associated MPIs are identical.}

\begin{figure}[ht]
    \centering
    \resizebox{\textwidth}{!}{\begin{tikzpicture}[
    vertex/.style={circle, draw, minimum size=6mm, font=\normalsize},
    edge/.style={->, thick, >=latex},
    box/.style={draw, rounded corners, thick, draw=black!70}
]

\begin{scope}[xshift=0cm, yshift=0cm]
    \draw[box] (-0.5, 0.7) rectangle (2.3, -2.0);
    \node[vertex] (v1) at (0, 0) {1};
    \node[vertex] (v3) at (1.8, 0) {3};
    \node[vertex] (v2) at (0.9, -1.5) {2};
    \draw[edge] (v1) to[bend left=15] (v3);
    \draw[edge] (v3) to[bend left=15] (v1);
    \node at (0.9, -2.5) {$C_{13}$};
\end{scope}

\node at (2.7, -0.75) {\Large $+$};

\begin{scope}[xshift=3.6cm, yshift=0cm]
    \draw[box] (-0.5, 0.7) rectangle (2.3, -2.0);
    \node[vertex] (v1) at (0, 0) {1};
    \node[vertex] (v3) at (1.8, 0) {3};
    \node[vertex] (v2) at (0.9, -1.5) {2};
    \draw[edge] (v1) to[bend left=15] (v2);
    \draw[edge] (v2) to[bend left=15] (v1);
    \node at (0.9, -2.5) {$C_{12}$};
\end{scope}

\node at (6.3, -0.75) {\Large $+$};

\begin{scope}[xshift=7.2cm, yshift=0cm]
    \draw[box] (-0.5, 0.7) rectangle (2.3, -2.0);
    \node[vertex] (v1) at (0, 0) {1};
    \node[vertex] (v3) at (1.8, 0) {3};
    \node[vertex] (v2) at (0.9, -1.5) {2};
    \draw[edge] (v2) to[bend left=15] (v3);
    \draw[edge] (v3) to[bend left=15] (v2);
    \node at (0.9, -2.5) {$C_{23}$};
\end{scope}

\node at (9.9, -0.75) {\Large $-$};

\begin{scope}[xshift=10.8cm, yshift=0cm]
    \draw[box] (-0.5, 0.7) rectangle (2.3, -2.0);
    \node[vertex] (v1) at (0, 0) {1};
    \node[vertex] (v3) at (1.8, 0) {3};
    \node[vertex] (v2) at (0.9, -1.5) {2};
    \draw[edge] (v1) -- (v2);
    \draw[edge] (v2) -- (v3);
    \draw[edge] (v3) -- (v1);
    \node at (0.9, -2.5) {$C_{123}$};
\end{scope}

\node at (13.5, -0.75) {\Large $=$};

\begin{scope}[xshift=14.4cm, yshift=0cm]
    \draw[box] (-0.5, 0.7) rectangle (2.3, -2.0);
    \node[vertex] (v1) at (0, 0) {1};
    \node[vertex] (v3) at (1.8, 0) {3};
    \node[vertex] (v2) at (0.9, -1.5) {2};
    \draw[edge] (v1) -- (v3);
    \draw[edge] (v3) -- (v2);
    \draw[edge] (v2) -- (v1);
    \node at (0.9, -2.5) {$C_{132}$};
\end{scope}

\end{tikzpicture}}
    \caption{Construction of the non-basis cycle $C_{1,3,2}$ using a linear combination of cycles from $\mathcal{B}_{l_1}$.}
    \label{fig:linear_combination}
\end{figure}

To evaluate the total money pump cost of a given cycle basis, we aggregate the MPIs of its constituent cycles, assigning an MPI of 0 to any cycle that does not violate GARP. For each cycle basis, it might seem intuitive to take the arithmetic mean of the MPIs of the cycles in that cycle basis, and then aggregating again by taking the mean across all six cycle bases to arrive at a global measure for the MPI over the dataset. However, this is not quite the right approach as cycles of different lengths are oversampled at different rates. By simply counting the cycles across $l_1$ to $l_6$, we can see that there are eighteen 2-cycles and six 3-cycles. Specifically, there are three unique 2-cycles ($C_{1,2}, C_{1,3}, C_{2,3}$) that appear 6 times each, whereas the two unique 3-cycles ($C_{1,2,3}, C_{1,3,2}$) appear only 3 times each. We need to account for these frequencies when aggregating. 

To achieve this, we must instead compute a weighted average where the MPI of a given cycle is weighted inversely by its frequency. For our 3-observation dataset, we weight the MPIs of 2-cycles by $1/6$ and 3-cycles by $1/3$. Recall that we previously identified exactly three valid GARP-violating cycles in this dataset ($C_{1,2}, C_{2,3}, C_{1,2,3}$) with MPI values of 0.15, 0.2, and 0.2 respectively. For any given basis $l_i$, we compute its frequency-adjusted sum of violating cycle costs $M(l_i)$, alongside a frequency-adjusted topological weight $W(l_i)$ (where non-violating cycles similarly contribute a weight of 0):
\begin{align*}
    l_1: \quad M(l_1) &= \frac{0.15}{6} + \frac{0.2}{6} + \frac{0.2}{3}, \quad & W(l_1) &= \frac{1}{6} + \frac{1}{6} + \frac{1}{3}, \\[0.5em]
    l_2: \quad M(l_2) &= \frac{0.2}{6} + \frac{0.15}{6}, \quad & W(l_2) &= \frac{1}{6} + \frac{1}{6}, \\[0.5em]
    l_3: \quad M(l_3) &= \frac{0.15}{6} + \frac{0.2}{6}, \quad & W(l_3) &= \frac{1}{6} + \frac{1}{6}, \\[0.5em]
    l_4: \quad M(l_4) &= \frac{0.2}{6} + \frac{0.2}{3} + \frac{0.15}{6}, \quad & W(l_4) &= \frac{1}{6} + \frac{1}{3} + \frac{1}{6}, \\[0.5em]
    l_5: \quad M(l_5) &= \frac{0.15}{6} + \frac{0.2}{3} + \frac{0.2}{6}, \quad & W(l_5) &= \frac{1}{6} + \frac{1}{3} + \frac{1}{6}, \\[0.5em]
    l_6: \quad M(l_6) &= \frac{0.2}{6} + \frac{0.15}{6}, \quad & W(l_6) &= \frac{1}{6} + \frac{1}{6}. 
\end{align*}

To calculate what we refer to as the \textit{Basis Average MPI} for the dataset $\mathcal{D}$ ($\text{BA-MPI}_\mathcal{D}$), we sum these adjusted costs across the entire ensemble of bases and normalize by the sum of their weights:
\begin{equation*}
    \text{BA-MPI}_\mathcal{D} = \frac{\sum_{i=1}^{6} M(l_i)}{\sum_{i=1}^{6} W(l_i)} = \frac{0.55}{3.0} = 0.1833.
\end{equation*}
This evaluates \textit{exactly} to the previously calculated $\text{OA-MPI}_{\mathcal{D}}$. Filtering out cycles that do not violate GARP and weighting GARP-violating cycles from the fundamental cycle bases by their appropriate frequencies across all bases allows us to recover the original mean MPI suggested by \citet{echenique2011money}.

\subsection{Basis Average MPI}
The logic from the previous example extends completely to arbitrary datasets, from which we can provide a highly tractable method for calculating the exact OA-MPI for any dataset. 

First, to obtain the fundamental cycle basis from a generic H-path, we divide the available edges in the complete directed graph into the same two categories of \textit{forward-edges}, which jump ahead along the path sequence, and \textit{back-edges}, which jump backwards. Every cycle in our basis is formed by closing a forward path with exactly one back-edge. Cycles from contiguous subsegments of the H-path, say $v_i \to v_{i+1} \to \dots \to v_j$, are simply formed by taking the subsegment and closing the loop using a single back-edge from $v_j \to v_i$. This set of cycles extracts $n(n-1)/2$ basis cycles. Similarly, for the skip-edge 2-cycles, we take every forward skip-edge (e.g., $v_i \to v_j$ where $j \ge i+2$) and close the loop using the single back-edge $v_j \to v_i$. This extracts an additional $(n-1)(n-2)/2$ basis cycles. Thus, for each H-path we have a unique cycle basis with exactly $(n-1)^2$ distinct cycles.\footnote{As each class of cycles uniquely introduces a distinct set of non-tree edges (skip-edges and back-edges), the extracted cycles are strictly linearly independent. To see this, consider a boolean matrix where each row represents an extracted cycle and each column represents a non-tree edge in the graph, with a $1$ indicating that the edge is used in the cycle. If we order the columns such that forward skip-edges come first, followed by back-edges, and order the rows such that the 2-cycles come first, followed by the contiguous cycles, we get a block upper-triangular matrix with identity matrices on the diagonal. This triangular structure guarantees that the matrix has full rank, implying strictly linearly independent cycles within each basis.}

Given the fundamental cycle basis construction, we saw in the previous example the need to weight by the inverse of the exact frequencies of $j$-cycles, as shorter cycles are essentially oversampled relative to longer ones across the ensemble of bases. To formally derive these frequencies, let $\mathbb{I}(C \in \mathcal{B}_p)$ be an indicator function equal to $1$ if cycle $C$ is generated by path $p$. The total number of times $C$ is extracted across all $n!$ possible bases is $\sum_{p \in \mathcal{P}} \mathbb{I}(C \in \mathcal{B}_p)$. Given the way we construct the fundamental cycle bases, we can derive an exact formula for $\sum_{p \in \mathcal{P}} \mathbb{I}(C \in \mathcal{B}_p)$ which will depend only on the length of the cycle, and naturally the total number of nodes (observations) $n$. Denoting this frequency by $f(n,j)$, this can be determined via a straightforward counting argument.

Consider 2-cycles ($j=2$). Since there are $n!$ possible bases in total and every possible 2-cycle is extracted in every basis (as we take every possible pair of nodes and close the loop using a back-edge), the frequency of any 2-cycle is simply $f(n, 2) = n!$.

For cycles of length $j \ge 3$, our construction algorithm extracts a cycle if and only if its $j$ vertices appear consecutively in the H-path, forming a forward segment that is closed by a single back-edge. A specific $j$-cycle can be formed by $j$ distinct rotations of its vertices (e.g., a 3-cycle $A \to B \to C \to A$ can be formed by the sequence $A \to B \to C$, or $B \to C \to A$, or $C \to A \to B$). For any one of these $j$ specific sequences, we can treat those $j$ vertices as a single indivisible block. When constructing a path of length $n$, we are essentially placing this 1 block alongside the remaining $n-j$ individual vertices. For example, if $n=5$ and we are placing the specific sequence $[A \to B \to C]$, we are ordering $1$ block and $2$ remaining vertices (say, $D$ and $E$). This gives us a total of $n-j+1$ discrete objects to order (the block $[A \to B \to C]$, $D$, and $E$). Since the remaining $n-j$ vertices can be arranged in $(n-j)!$ ways, exactly $(n-j+1) \times (n-j)! = (n-j+1)!$ paths will contain this specific sequence. Summing over all $j$ possible rotations yields $f(n, j) = j(n-j+1)! \text{ for } j \ge 3$.

Given these frequencies, for any generic basis $\mathcal{B}_p$ generated by an H-path $p \in \mathcal{P}$, let $V(\mathcal{B}_p)$ denote the set of strictly GARP-violating cycles within that basis. Generalizing from the example, we define the frequency-adjusted sum of the violating cycle costs $M(p)$, and the corresponding frequency-adjusted weight $W(p)$, as:
\begin{equation*}
    M(p) = \sum_{C \in V(\mathcal{B}_p)} \frac{\text{MPI}(C)}{f(n, |C|)}, \quad W(p) = \sum_{C \in V(\mathcal{B}_p)} \frac{1}{f(n, |C|)}.
\end{equation*}

We can now define the BA-MPI for a generic dataset $D$ as:
\begin{equation*}
    \text{BA-MPI}_D = \frac{ \sum_{p \in \mathcal{P}} M(p) }{ \sum_{p \in \mathcal{P}} W(p) }.
\end{equation*}

This allows us to state the following equivalence:

\begin{theorem} \label{thm:ba_equiv}
For any dataset $D$, $\text{BA-MPI}_{D} = \text{OA-MPI}_{D}$.
\end{theorem}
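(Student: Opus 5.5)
The plan is to exchange the order of summation in the numerator and denominator of $\text{BA-MPI}_D$, turning both sums over paths into sums over cycles. Let $\mathcal{V}$ denote the set of all strictly GARP-violating cycles of $D$, each counted once up to cyclic rotation; this is the set averaged over in $\text{OA-MPI}_D$. For each path $p$, the set $V(\mathcal{B}_p)$ is exactly $\mathcal{V}\cap\mathcal{B}_p$. Hence
\begin{equation*}
\sum_{p\in\mathcal{P}} M(p) = \sum_{p\in\mathcal{P}}\sum_{C\in\mathcal{V}} \mathbb{I}(C\in\mathcal{B}_p)\,\frac{\text{MPI}(C)}{f(n,|C|)} = \sum_{C\in\mathcal{V}} \frac{\text{MPI}(C)}{f(n,|C|)}\sum_{p\in\mathcal{P}}\mathbb{I}(C\in\mathcal{B}_p).
\end{equation*}
The same manipulation applies to $\sum_p W(p)$ with $\text{MPI}(C)$ replaced by $1$. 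Everything is a finite sum, so the interchange is immediate.

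The key step is the identity $\sum_{p\in\mathcal{P}}\mathbb{I}(C\in\mathcal{B}_p)=f(n,|C|)$ for every cycle $C$. It reduces the numerator to $\sum_{C\in\mathcal{V}}\text{MPI}(C)$ and the denominator to $|\mathcal{V}|$, and their ratio is by definition $\text{OA-MPI}_D$. If $\mathcal{V}$ is empty, both sides are undefined or taken to be zero by convention, so I assume $D$ violates GARP.

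The counting argument given before the theorem must be made rigorous in two cases. For $j=2$, I verify that every unordered pair $\{v_a,v_b\}$ with $a<b$ in any H-path yields the 2-cycle $C_{a,b}$ exactly once. If $b=a+1$, it comes from the length-one contiguous subsegment. If $b\ge a+2$, it comes from the skip-edge construction. So each 2-cycle appears once per basis and $f(n,2)=n!$.

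For $j\ge3$, I establish the claim that a $j$-cycle lies in $\mathcal{B}_p$ if and only if some rotation of it occupies $j$ consecutive positions of $p$. Sufficiency holds because the basis contains the contiguous-segment cycle closed by the back-edge. Necessity holds because the only non-contiguous basis cycles are 2-cycles, and a contiguous segment determines its cycle uniquely. Distinct rotations cannot occur simultaneously in the same path, since they would require the same $j$ vertices in two different orders. The events are therefore disjoint, each has $(n-j+1)!$ realizations by the block argument, and the total is $j(n-j+1)!$.

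The main obstacle is the bookkeeping of identities between cycles. One issue is that 2-cycles must be identified with their reversal and with their rotation. A second is that the cycle $C_{v_i,v_{i+1}}$ arising from a length-one segment, and the cycle from a skip-edge, must not be double-counted; this double-counting does not occur because the two constructions act on disjoint vertex pairs. A third is that $\mathcal{V}$ in the OA-MPI must use the same cycle identification, namely distinct cycles up to rotation, counting $C_{1,2,3}$ and $C_{1,3,2}$ separately, as in the example.

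Finally, I note that the argument never uses the specific form of $\text{MPI}(C)$. It applies verbatim to $\text{MPI}_{\tilde D}$, or to any cycle functional, which confirms the equivalence for the normalized dataset as well.
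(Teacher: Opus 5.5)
Your proposal is correct and follows the same route as the paper: interchange the finite sums over paths and violating cycles, then use $\sum_{p\in\mathcal{P}}\mathbb{I}(C\in\mathcal{B}_p)=f(n,|C|)$ to collapse the numerator to $\sum_{C\in V}\text{MPI}(C)$ and the denominator to $|V|$. Your explicit check of that frequency identity is more careful than the paper's proof, which simply says the identity holds by construction and relies on the counting argument given before the theorem; your check shows that each 2-cycle appears exactly once per basis, and for $j\ge 3$ that a cycle lies in $\mathcal{B}_p$ exactly when one rotation is consecutive and that no two rotations can be consecutive at once.
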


Theorem \ref{thm:ba_equiv} formally establishes that estimating the BA-MPI is identical to estimating the OA-MPI, using all possible H-paths, instead of requiring the enumeration of the entire cycle space. 

Directly enumerating all cycles to compute the exact OA-MPI is extremely costly, requiring $\mathcal{O}(n!)$ time to evaluate with \citet{smeulders2013note} proving that calculating the OA-MPI is an NP-hard problem. As Theorem \ref{thm:ba_equiv} establishes that the exact BA-MPI and exact OA-MPI yield identical values for any input dataset, computing the exact BA-MPI is also NP-hard. In fact, directly evaluating the BA-MPI over all $n!$ paths is asymptotically worse, taking $\mathcal{O}(n^2 n!)$ operations, since we must extract and evaluate the cycle basis for each H-path. This leads us to a natural estimator of the BA-MPI. 

\textbf{BA-MPI Estimator:} Despite Theorem 1 guaranteeing an exact equivalence between the BA-MPI and the OA-MPI, computing all $n!$ paths remains intractable (e.g., a 30-observation dataset yields $30! \approx 2.65 \times 10^{32}$ paths). However, the BA-MPI has essentially transformed a combinatorial graph problem into a statistical problem, allowing us to use a random sample from the $n!$ H-paths as an estimator for the true BA-MPI, and thus the OA-MPI.

I define the \textit{BA-MPI Estimator} ($\text{BA-MPI}_{D,K}$) as the sample ratio estimator. To compute this, we sample $K$ H-paths (sampled with replacement) to generate a random sample of $K$ cycle bases, denoted by $\mathcal{S}_K$. With weights $w(C) = 1/f(n, |C|)$, we use these bases in the summations for the numerator and denominator:
\begin{equation*}
    \text{BA-MPI}_{D,K} = \frac{ \frac{1}{K} \sum_{\mathcal{B} \in \mathcal{S}_K} \sum_{C \in V(\mathcal{B})} w(C) \text{MPI}(C) }{ \frac{1}{K} \sum_{\mathcal{B} \in \mathcal{S}_K} \sum_{C \in V(\mathcal{B})} w(C) },
\end{equation*}
allowing us to state the following proposition:

\begin{proposition} \label{prop:mean_MPI}
The BA-MPI estimator ($\text{BA-MPI}_{D,K}$) is a consistent estimator of the OA-MPI ($\text{OA-MPI}_{D}$) as the number of sampled paths $K \to \infty$.
\end{proposition}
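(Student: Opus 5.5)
The plan is to view $\text{BA-MPI}_{D,K}$ as a ratio of two sample means of i.i.d.\ random variables and combine the strong law of large numbers with the continuous mapping theorem and Theorem \ref{thm:ba_equiv}. Let $P_1,\dots,P_K$ be H-paths drawn independently and uniformly (with replacement) from the set $\mathcal{P}$ of all $n!$ H-paths. Each draw gives the pair $(M(P_k),W(P_k))$ defined above. Because $\mathcal{P}$ is finite, both quantities are bounded random variables: $0\le W(p)\le (n-1)^2$ and $0\le M(p)\le W(p)\max_C \text{MPI}(C)$, with $f(n,|C|)\ge 1$. Hence all moments exist. The estimator is exactly $\bar M_K/\bar W_K$, where $\bar M_K=\frac1K\sum_k M(P_k)$ and $\bar W_K=\frac1K\sum_k W(P_k)$.

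\textbf{Step 1.} By the strong (or weak) law of large numbers,
\begin{equation*}
\bar M_K\to \mathbb{E}[M(P)]=\frac{1}{n!}\sum_{p\in\mathcal{P}}M(p)
\quad\text{and}\quad
\bar W_K\to \mathbb{E}[W(P)]=\frac{1}{n!}\sum_{p\in\mathcal{P}}W(p),
\end{equation*}
almost surely as $K\to\infty$.

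\textbf{Step 2.} The map $(a,b)\mapsto a/b$ is continuous at any point with $b\neq 0$. By the continuous mapping theorem, $\bar M_K/\bar W_K$ therefore converges almost surely to
\begin{equation*}
\frac{\sum_p M(p)}{\sum_p W(p)}=\text{BA-MPI}_D,
\end{equation*}
since the factors $1/n!$ cancel. Theorem \ref{thm:ba_equiv} then identifies this limit with $\text{OA-MPI}_D$. Almost-sure convergence implies convergence in probability, which gives consistency.

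\textbf{Main obstacle: the nondegeneracy condition.} The argument needs $\mathbb{E}[W(P)]>0$, and we must also handle samples in which $\bar W_K=0$.

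\emph{Positivity of the limit.} For $\text{OA-MPI}_D$ to be defined, $D$ must violate GARP, so at least one strictly violating cycle $C^\ast$ exists. By the counting argument for $f(n,j)$, $C^\ast$ appears in exactly $f(n,|C^\ast|)\ge 1$ bases. Hence
\begin{equation*}
\sum_p W(p)=\sum_{C\ \text{violating}} f(n,|C|)\cdot\frac{1}{f(n,|C|)}=\#\{\text{violating cycles}\}\ge 1.
\end{equation*}
This is the same double-counting identity that underlies Theorem \ref{thm:ba_equiv}. It gives $\mathbb{E}[W(P)]\ge 1/n!>0$.

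\emph{Degenerate samples.} The event $\{\bar W_K=0\}$ requires that none of the $K$ draws contains any violating cycle. Its probability is at most $(1-q)^K\to 0$, where $q\ge 1/n!$ is the probability that a single draw contains $C^\ast$. So the ratio is well defined with probability tending to one. Alternatively, almost surely $\bar W_K>0$ for all sufficiently large $K$, and any convention for the estimator on the null event does not affect the limit.
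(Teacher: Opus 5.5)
Your proposal is correct and follows the paper's proof. You apply the strong law of large numbers to the i.i.d.\ sample means $\bar M_K$ and $\bar W_K$, then the continuous mapping theorem to the ratio, and identify the limit with $\text{OA-MPI}_D$ through the double-counting identity behind Theorem~\ref{thm:ba_equiv}; your explicit bound $\mathbb{E}[W(P)]\ge 1/n!>0$ and your handling of the event $\{\bar W_K=0\}$ usefully tighten the paper's one-line remark that the denominator limit is non-zero.
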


If we consider $\text{BA-MPI}_{D}$ as an estimand which is equivalent to $\text{OA-MPI}_{D}$ by Theorem \ref{thm:ba_equiv}, it is easy to see that Proposition \ref{prop:mean_MPI} holds using the Strong Law of Large Numbers and the Continuous Mapping Theorem.

Crucially, the BA-MPI estimator overcomes the $\mathcal{O}(n!)$ complexity by drawing a finite sample of $K$ H-paths. For each sampled path $v_1 \to v_2 \to \dots \to v_n$, we can pre-compute cumulative edge costs (prefix sums) along the path in $\mathcal{O}(n)$ time. The total cost of any contiguous subsegment $v_i \to \dots \to v_j$ is then obtained in $\mathcal{O}(1)$ time by differencing these cumulative sums, and adding the back-edge cost $X_{v_j, v_i}$ takes $\mathcal{O}(1)$. As each basis contains $(n-1)^2$ cycles, evaluating all cycles in the basis requires only $\mathcal{O}(n^2)$ operations. With complexity of drawing a random path permutation being only $\mathcal{O}(n)$, each sampled basis is processed in $\mathcal{O}(n^2)$ time. Thus, by choosing $K \ll n!$, the total computational complexity to evaluate $K$ randomly sampled bases is strictly polynomial at $\mathcal{O}(K n^2)$, providing an efficient, consistent estimator for the OA-MPI, whose direct computation is fundamentally NP-hard.

Before proceeding, we discuss why sampling H-paths to extract cycle bases is vastly superior to drawing individual cycles at random to compute a Monte Carlo estimator. In principle, one could estimate the mean MPI by uniformly sampling cycles directly from the graph. This, however, is typically intractable because it implicitly requires counting/enumerating the total number of cycles.\footnote{Specifically, counting or uniformly sampling simple cycles in a directed graph without enumerating the entire cycle space is a \#P-complete problem \citep[Section 4, Problem 14]{valiant1979complexity}.} Alternatively, sampling from the complete cycle space and applying rejection sampling is computationally unviable, as the proportion of violating cycles is typically tiny relative to the size of the cycle space. In contrast, drawing an H-path uniformly at random is computationally trivial. By sampling paths instead of individual cycles, we circumvent these sampling bottlenecks, rapidly generating structured batches of $(n-1)^2$ cycles with exactly known selection probabilities.

While Proposition \ref{prop:mean_MPI} establishes consistency, as a ratio estimator, clearly the BA-MPI estimator will exhibit some form of finite-sample bias.\footnote{In the sense that the expectation of a ratio is, in general, not equal to the ratio of the expectations.} In the following proposition, we establish an upper bound on this bias. Let $\mu_M = \mathbb{E}[M(p)]$ and $\mu_W = \mathbb{E}[W(p)]$ denote the true population expectations of the basis sum and basis weight, respectively, with $\sigma_M^2 = \text{Var}[M(p)]$ and $\sigma_W^2 = \text{Var}[W(p)]$ denoting their respective population variances. 

\begin{proposition} \label{prop:mean_bias}
Let $W_{\min} = \min_{p \in \mathcal{P}} W(p) > 0$ denote the population minimum basis weight in dataset $D$.\footnote{A strictly positive minimum basis weight $W_{\min} > 0$ is guaranteed whenever the dataset contains at least one 2-cycle violation, as every 2-cycle is extracted in the basis of every H-path. In typical revealed preference datasets, 2-cycle violations are ubiquitous (holding for 100\% of the violating households in our empirical application). If a dataset were to contain only higher-order cycles ($j \ge 3$), one can alternatively condition sampling on bases with non-empty violating cycle sets.} The absolute finite-sample bias of the BA-MPI estimator is bounded such that:
\begin{equation*}
    \left| \mathbb{E}[\text{BA-MPI}_{D,K}] - \frac{\mu_M}{\mu_W} \right| \le \min\left( 1, \; \frac{1}{K} \left[ \frac{\mu_W \sigma_M \sigma_W + \mu_M \sigma_W^2}{\mu_W^2 W_{\min}} \right] \right).
\end{equation*}
\end{proposition}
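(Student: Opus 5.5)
The bound is the minimum of two pieces, and I will prove each separately. The trivial bound of $1$ comes from ranges. Every $\text{MPI}(C)\in[0,1]$ for a violating cycle, because normalized expenditures are $1$ and $X_{k_i,k_j}\le 1$ since $\tilde{\mathbf p}^{k_i}\cdot\mathbf x^{k_j}\ge 0$. So for every path $M(p)\le W(p)$, and both the estimator $\hat R_K=\bar M_K/\bar W_K$ and the target $R=\mu_M/\mu_W$ lie in $[0,1]$. Their difference, and hence the absolute bias, is at most $1$. Here $\bar M_K$ and $\bar W_K$ are the sample means over the $K$ i.i.d.\ uniformly drawn paths, so $\mathbb E[\bar M_K]=\mu_M$, $\mathbb E[\bar W_K]=\mu_W$, $\text{Var}(\bar M_K)=\sigma_M^2/K$, $\text{Var}(\bar W_K)=\sigma_W^2/K$, and $|\text{Cov}(\bar M_K,\bar W_K)|\le\sigma_M\sigma_W/K$ by Cauchy--Schwarz.

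For the $1/K$ piece I use the exact ratio-estimator identity rather than a Taylor expansion, so no remainder needs controlling. Write $\delta_W=\bar W_K-\mu_W$ and $\delta_M=\bar M_K-\mu_M$. Using $\frac{1}{\bar W}=\frac{1}{\mu_W}-\frac{\delta_W}{\mu_W\bar W}$ and iterating once gives
\[
\hat R_K = \frac{\bar M_K}{\mu_W} - \frac{\bar M_K\,\delta_W}{\mu_W^2} + \frac{\bar M_K\,\delta_W^2}{\mu_W^2\,\bar W_K}.
\]
Taking expectations, the first term gives $R$. The second gives $-\mathbb E[\bar M_K\delta_W]/\mu_W^2=-\text{Cov}(\bar M_K,\bar W_K)/\mu_W^2$, whose absolute value is at most $\sigma_M\sigma_W/(K\mu_W^2)$. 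This already has the shape of the first bracketed term, $\mu_W\sigma_M\sigma_W/(\mu_W^2 W_{\min})$, once we note $W_{\min}\le\mu_W$. In fact the bound is slightly looser than what we obtain, and I will record that inequality explicitly.

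The third term is the main obstacle, since it involves the random denominator. Here I use the almost-sure bounds $\bar W_K\ge W_{\min}>0$ (each sampled $W(p)\ge W_{\min}$) and $\bar M_K\le\bar W_K$. The crude step gives $\bar M_K\delta_W^2/(\mu_W^2\bar W_K)\le\delta_W^2/\mu_W^2$, with expectation $\sigma_W^2/(K\mu_W^2)$. To land exactly on the stated form $\mu_M\sigma_W^2/(\mu_W^2W_{\min})$, I would instead bound $\bar M_K/\bar W_K\le \bar M_K/W_{\min}$. This leaves $\mathbb E[\bar M_K\delta_W^2]/(\mu_W^2W_{\min})$, and that requires handling the third-moment term $\mathbb E[\delta_M\delta_W^2]$. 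This is the delicate point. If it cannot be absorbed cleanly, I would fall back to the crude bound $\sigma_W^2/(K\mu_W^2)$ and compare it with the stated term. Since $R\in[0,1]$ gives $\mu_M\le\mu_W$, neither bound obviously dominates, so the precise bookkeeping of this term is where I expect the work to lie.

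Adding the two bounds gives $|\mathbb E[\hat R_K]-R|\le\frac{1}{K}\cdot\frac{\mu_W\sigma_M\sigma_W+\mu_M\sigma_W^2}{\mu_W^2W_{\min}}$ up to the resolution of that last step. Combining with the range argument yields the minimum.
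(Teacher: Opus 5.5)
Your range argument for the constant $1$ and your two-step identity for $\hat R_K$ are correct. Your fallback $\frac{1}{K}\,\frac{\sigma_M\sigma_W+\sigma_W^2}{\mu_W^2}$ is even a valid bias bound in its own right. But it is not the stated bound, and the proposal leaves the stated inequality unproved.

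The fallback does not imply the statement. That would require $\sigma_M\sigma_W(\mu_W-W_{\min})\ge\sigma_W^2(W_{\min}-\mu_M)$, which already fails in the paper's three-observation example. There $\mu_W=1/2$, $W_{\min}=1/3$, $\sigma_W=1/6$, $\sigma_M=1/30$ and $\mu_M=11/120$, so your bound is $\frac{2}{15K}$ against the asserted $\frac{23}{360K}$.

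Your other route does not close either. Once you have bounded the covariance term on its own, you are left with $\mathbb{E}[\delta_M\delta_W^2/\bar W_K]$. This term has no sign, and bounding it separately (e.g.\ via $\max_p|M(p)-\mu_M|$) brings in constants that are not in the statement.

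The missing idea is that the covariance term must not be bounded separately: it has to absorb the cubic term. Since $\mathbb{E}[\delta_W]=0$, you have $\mathbb{E}[\bar M_K\delta_W]=\mathbb{E}[\delta_M\delta_W]$. Split $\bar M_K=\mu_M+\delta_M$ in your third term and use $1-\delta_W/\bar W_K=\mu_W/\bar W_K$, so that $-\delta_M\delta_W+\delta_M\delta_W^2/\bar W_K=-\mu_W\delta_M\delta_W/\bar W_K$. This gives
\[
\mathbb{E}[\hat R_K]-R=\frac{1}{\mu_W^2}\,\mathbb{E}\!\left[\frac{\mu_M\delta_W^2-\mu_W\delta_M\delta_W}{\bar W_K}\right].
\]
This is exactly the identity the paper obtains directly, by subtracting the mean-zero linear term $(\mu_W\delta_M-\mu_M\delta_W)/\mu_W^2$ from $\hat R_K-R=(\mu_W\delta_M-\mu_M\delta_W)/(\mu_W\bar W_K)$.

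Now the remainder is purely quadratic in the deviations, and every term carries the random denominator. The bound $\bar W_K\ge W_{\min}$, the triangle inequality, Cauchy--Schwarz $\mathbb{E}|\delta_M\delta_W|\le\sigma_M\sigma_W/K$, and $\mathbb{E}[\delta_W^2]=\sigma_W^2/K$ then yield precisely $\frac{1}{K}\,\frac{\mu_W\sigma_M\sigma_W+\mu_M\sigma_W^2}{\mu_W^2W_{\min}}$. The factor $\mu_W/W_{\min}$ on the covariance term, which you read as slack, is exactly the price of absorbing the third-moment term.
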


Naturally, this upper bound on the bias shrinks proportionally to the reciprocal of the number of draws, $K$. As an analytical upper bound without distributional assumptions, it is typicaly conservative; in practice, to compute this bound, we use the sample analogues of the population parameters to evaluate the finite-sample bias. 

\subsection{Basis Percentile MPI}
Analogously, we can also compute the median MPI of a dataset, or indeed, any arbitrary $\alpha$-percentile. We refer to this as the \textit{Basis $\alpha$-Percentile MPI} ($\text{BP}_\alpha\text{-MPI}_D$). The $\alpha$-percentile MPI originally proposed by \citet{echenique2011money} is denoted by $\text{OP}_\alpha\text{-MPI}_D$ (Original $\alpha$-Percentile MPI).

Recall from our previous example of a 3-observation dataset $\mathcal{D}$, the global set of strictly violating cycles contained exactly three elements with MPIs of 0.15, 0.2, and 0.2. The $\text{OP}_{50}\text{-MPI}_\mathcal{D}$ is simply the median of 0.15, 0.2, and 0.2, which is $0.20$. Naturally, this extends to any arbitrary $\alpha$-percentile. For instance, the $\text{OP}_{25}\text{-MPI}_\mathcal{D}$ is $0.15$.\footnote{As the $\alpha$-percentile is formally defined as the smallest value where the CDF reaches or exceeds $\alpha$, the 25th percentile is $0.15$.}

Formally, for any global set of strictly violating cycles $V$, we define the `original' cumulative distribution function ($\text{O-CDF}_D$) as the exact proportion of all cycles with an MPI less than or equal to $x$:
\begin{equation*}
    \text{O-CDF}_D(x) = \frac{1}{|V|} \sum_{C \in V} \mathbb{I}(\text{MPI}(C) \le x).
\end{equation*}
The $\text{OP}_\alpha\text{-MPI}_D$ is the smallest value $M_\alpha$ such that $\text{O-CDF}_D(M_\alpha) \ge \alpha$. 

To find the equivalent CDF using the fundamental cycle bases, we use a weighted and renormalized CDF, denoted $\text{B-CDF}_D(x)$. For every path $p \in \mathcal{P}$, let $V(\mathcal{B}_p)$ denote the set of strictly violating cycles extracted from its corresponding fundamental cycle basis $\mathcal{B}_p$. With the same weights, $w(C) = 1/f(n, |C|)$, the $\text{B-CDF}_D(x)$ over all paths is given by:
\begin{equation*}
    \text{B-CDF}_D(x) = \frac{ \sum_{p \in \mathcal{P}} \sum_{C \in V(\mathcal{B}_p)} w(C) \mathbb{I}(\text{MPI}(C) \le x) }{ \sum_{p \in \mathcal{P}} \sum_{C \in V(\mathcal{B}_p)} w(C) }.
\end{equation*}

The $\text{OP}_\alpha\text{-MPI}_D$ is then calculated from the $\text{O-CDF}_D$, and the $\text{BP}_\alpha\text{-MPI}_D$ is defined as the specific MPI value $M_\alpha$ where $\text{B-CDF}_D(M_\alpha) \ge \alpha$. 

Continuing our three-observation example, the cycle weights are $w(C_{1,2}) = 1/6$, $w(C_{2,3}) = 1/6$, and $w(C_{1,2,3}) = 1/3$. As in the $\text{BA-MPI}_\mathcal{D}$ calculation, the denominator of $\text{B-CDF}_\mathcal{D}(x)$ is $3$. At $x = 0.15$, only $C_{1,2}$ contributes mass across all paths, yielding:
\begin{equation*}
    \text{B-CDF}_\mathcal{D}(0.15) = \frac{6(1/6)}{3} = \frac{1}{3}.
\end{equation*}
At $x = 0.2$, the remaining two cycles contribute, causing the CDF to jump to:
\begin{equation*}
    \text{B-CDF}_\mathcal{D}(0.2) = \frac{6(1/6) + 6(1/6) + 3(1/3)}{3} = \frac{3}{3} = 1.
\end{equation*}
Thus, to find the $\text{BP}_{50}\text{-MPI}_\mathcal{D}$, we see that the value where the $\text{B-CDF}_\mathcal{D}$ first reaches or exceeds $0.50$ is exactly $0.20$, matching the $\text{OP}_{50}\text{-MPI}_\mathcal{D}$ value. Similarly, for the 25th percentile, the $\text{B-CDF}_\mathcal{D}$ reaches or exceeds $0.25$ at the exact point $x = 0.15$, again mirroring $\text{OP}_{25}\text{-MPI}_\mathcal{D}$.

We can now state an equivalent to Theorem \ref{thm:ba_equiv} for the $\text{BP}_\alpha\text{-MPI}_D$.

\begin{theorem} \label{thm:bp_equiv}
For any dataset $D$ and any $\alpha \in (0, 1]$, $\text{BP}_\alpha\text{-MPI}_D = \text{OP}_\alpha\text{-MPI}_D$.
\end{theorem}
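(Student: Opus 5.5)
The plan is to show that the two CDFs coincide pointwise, $\text{B-CDF}_D(x) = \text{O-CDF}_D(x)$ for every $x \in \mathbb{R}$. The percentile statement then follows at once: both $\text{BP}_\alpha\text{-MPI}_D$ and $\text{OP}_\alpha\text{-MPI}_D$ are defined as the smallest $M_\alpha$ with $\text{CDF}(M_\alpha) \ge \alpha$, and identical functions have identical generalized inverses. The core tool is the same exchange-of-summation argument that underlies Theorem \ref{thm:ba_equiv}. The counting argument that produced $f(n,j)$ shows that every cycle $C$ with $|C| = j$ appears in exactly $f(n,|C|)$ of the $n!$ bases. For $j=2$ this count is $n!$. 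For $j \ge 3$ it is $j(n-j+1)!$, because $C$ is extracted exactly when one of its $j$ rotations appears as a contiguous block of the H-path.

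The first step is the numerator. Fix $x$ and write
\begin{equation*}
\sum_{p \in \mathcal{P}} \sum_{C \in V(\mathcal{B}_p)} w(C)\,\mathbb{I}(\text{MPI}(C)\le x) = \sum_{C \in V} w(C)\,\mathbb{I}(\text{MPI}(C)\le x) \sum_{p\in\mathcal{P}} \mathbb{I}(C \in \mathcal{B}_p).
\end{equation*}
This swap is valid because a cycle $C$ lies in $V(\mathcal{B}_p)$ exactly when $C \in \mathcal{B}_p$ and $C \in V$. Whether $C$ is strictly violating is a property of $C$ alone, not of the path. The inner sum equals $f(n,|C|) = 1/w(C)$, so the numerator collapses to $\sum_{C\in V}\mathbb{I}(\text{MPI}(C)\le x)$. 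Setting the indicator to $1$ in the same computation turns the denominator into $|V|$, and the ratio is exactly $\text{O-CDF}_D(x)$.

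The main obstacle is the bookkeeping of cycle identity, which must be consistent across the whole argument. Three points need care.
\begin{enumerate}
\item Cycles must be treated as equivalence classes up to rotation, including the identification $C_{1,2}\equiv C_{2,1}$ for 2-cycles. Only then does the count $f(n,2)=n!$ hold: each unordered pair is closed exactly once per basis, whichever direction the path traverses it.
\item For $j\ge3$, no cycle may be produced twice within a single basis. This holds because the contiguous block and its closing back-edge determine the cycle uniquely.
\item Every cycle in $V$ must appear in at least one basis, i.e.\ $f(n,|C|)\ge 1$. This is immediate for $j\le n$, since any rotation of $C$ can be extended to a full H-path.
\end{enumerate}

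I would check these points once, reusing the framing of Theorem \ref{thm:ba_equiv}'s proof, and then verify that the two edge cases cause no trouble. If $V$ is nonempty, both denominators are positive. The case $\alpha = 1$ is covered by the smallest-value definition, since both CDFs reach $1$ at $\max_{C\in V}\text{MPI}(C)$.
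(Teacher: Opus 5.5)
Your proposal is correct and follows the paper's proof essentially step for step. You interchange the sums over paths and violating cycles, and use $\sum_{p\in\mathcal{P}}\mathbb{I}(C\in\mathcal{B}_p)=f(n,|C|)=1/w(C)$ to reduce the numerator to $\sum_{C\in V}\mathbb{I}(\text{MPI}(C)\le x)$ and the denominator to $|V|$. This gives $\text{B-CDF}_D\equiv\text{O-CDF}_D$, and the percentiles agree because identical CDFs have identical generalized inverses.

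Your bookkeeping remarks (cycles as rotation classes, no double extraction within one basis, $V\neq\emptyset$) make explicit what the paper compresses into ``by construction.'' For the no-double-extraction point, the cleanest justification is this: the vertex set of $C$ occupies at most one contiguous block of a given path, so at most one rotation of $C$ can appear there.
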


In practice, to compute this exact percentile without enumerating all $n!$ cycles, we use the same random sample of $K$ cycle bases, $\mathcal{S}_K$. The estimator for the $\text{B-CDF}_D(x)$ is given by:
\begin{equation*}
    \text{B-CDF}_{D,K}(x) = \frac{\sum_{\mathcal{B} \in \mathcal{S}_K} \sum_{C \in V(\mathcal{B})} w(C) \mathbb{I}(\text{MPI}(C) \le x) }{ \sum_{\mathcal{B} \in \mathcal{S}_K} \sum_{C \in V(\mathcal{B})} w(C) }.
\end{equation*}
By explicitly tracking this percentile across the sampled cycle bases, we get precise approximations of the true percentiles. The following propositions make this explicit in terms of consistency, and the exponential decay of the bias.

\begin{proposition} \label{prop:percentile_MPI}
The Basis $\alpha$-Percentile estimator ($\text{BP}_\alpha\text{-MPI}_{D,K}$) is a consistent estimator of the Original $\alpha$-Percentile MPI ($\text{OP}_\alpha\text{-MPI}_{D}$) as the number of sampled paths $K \to \infty$.
\end{proposition}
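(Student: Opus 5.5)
The plan is to reduce the statement to pointwise convergence of the sampled CDF, invoke Theorem \ref{thm:bp_equiv} to replace the target $\text{OP}_\alpha\text{-MPI}_D$ by $\text{BP}_\alpha\text{-MPI}_D$, and then use the fact that $\text{B-CDF}_D$ is a step function with finitely many jumps. The estimator $\text{BP}_\alpha\text{-MPI}_{D,K}$ is the generalized inverse of $\text{B-CDF}_{D,K}$ at $\alpha$, that is, the smallest $x$ with $\text{B-CDF}_{D,K}(x) \ge \alpha$.

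\textbf{Step 1: pointwise almost sure convergence of the sampled CDF.} Fix $x$. The paths in $\mathcal{S}_K$ are i.i.d.\ uniform on $\mathcal{P}$. Write $N_x(p) = \sum_{C \in V(\mathcal{B}_p)} w(C)\mathbb{I}(\text{MPI}(C)\le x)$, which is bounded. Then $\text{B-CDF}_{D,K}(x) = \bar N_{x,K}/\bar W_K$. By the Strong Law of Large Numbers, $\bar N_{x,K} \to \mathbb{E}[N_x(p)]$ and $\bar W_K \to \mu_W$ almost surely. Assuming $\mu_W>0$, as in Proposition \ref{prop:mean_bias}, the Continuous Mapping Theorem gives $\text{B-CDF}_{D,K}(x) \to \mathbb{E}[N_x]/\mu_W = \text{B-CDF}_D(x)$ almost surely. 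This is the same argument as for Proposition \ref{prop:mean_MPI}.

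\textbf{Step 2: reduction to finitely many points.} Let $m_1<\dots<m_r$ be the distinct MPI values of violating cycles. Both CDFs are right-continuous step functions that are constant on $[m_i,m_{i+1})$ and can jump only at the $m_i$. The sampled CDF only sees a subset of the cycles, but its support still lies in $\{m_i\}$. A finite intersection of almost sure events is almost sure, so with probability one $\text{B-CDF}_{D,K}(m_i)\to \text{B-CDF}_D(m_i)$ simultaneously for all $i$. Hence the convergence is uniform in $x$.

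\textbf{Step 3: inverting at level $\alpha$.} Let $m^*=\text{BP}_\alpha\text{-MPI}_D=m_{i^*}$, so that $\text{B-CDF}_D(m_{i^*})\ge\alpha$ and $\text{B-CDF}_D(m_{i^*-1})<\alpha$.

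The strict inequality is preserved: for large $K$, $\text{B-CDF}_{D,K}(m_{i})<\alpha$ for all $i<i^*$. Therefore the estimator is at least $m^*$ eventually.

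The main obstacle is the boundary case $\text{B-CDF}_D(m^*)=\alpha$ exactly. There the sampled CDF may oscillate around $\alpha$, and the estimator may jump to $m_{i^*+1}$ infinitely often, so it need not converge to $m^*$. This is the classic failure of sample quantiles at flat parts of the CDF.

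I would handle it as follows. In the generic case $\text{B-CDF}_D(m^*)>\alpha$, the inequality is strict, so eventually $\text{B-CDF}_{D,K}(m^*)\ge\alpha$. The estimator then equals $m^*$ for all large $K$, which gives almost sure (indeed eventual exact) convergence. For the tie case I would state the result with an explicit condition: consistency holds for every $\alpha$ not in the finite set of attained CDF values $\{\text{B-CDF}_D(m_i)\}$. For those exceptional $\alpha$, the estimator converges to the interval $[m^*, m_{i^*+1}]$.

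Finally, Theorem \ref{thm:bp_equiv} identifies the limit $m^*$ with $\text{OP}_\alpha\text{-MPI}_D$, which completes the argument.
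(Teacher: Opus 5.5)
Your proposal is correct and follows essentially the same route as the paper: pointwise SLLN/CMT convergence of $\text{B-CDF}_{D,K}$, upgraded to uniform convergence via the finite support, then a strict-margin inversion at level $\alpha$, with the limit identified with $\text{OP}_\alpha\text{-MPI}_D$ through Theorem \ref{thm:bp_equiv}. The paper handles the tie case you flag only by assuming $\text{B-CDF}_D(M_\alpha) > \alpha$ (so that its margin $\delta$ is positive), so your explicit restriction on $\alpha$ is the same caveat stated openly; note, though, that $\alpha = 1$ need not be excluded, since $\text{B-CDF}_{D,K}$ equals $1$ at the largest MPI value for every $K$.
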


As the set of possible cycle MPIs in any finite dataset is discrete, let $\delta > 0$ be the vertical margin isolating the true percentile $M_\alpha$ on the step-function CDF:
\begin{equation*}
    \delta = \min \left( \alpha - \max_{x < M_\alpha} \text{B-CDF}_D(x), \; \text{B-CDF}_D(M_\alpha) - \alpha \right) > 0,
\end{equation*}
where the first term represents the lower margin bounding the CDF prior to $M_\alpha$, and the second term represents the upper margin ensuring the threshold has crossed $M_\alpha$ (illustrated in Figure \ref{fig:delta_gap}). Any estimation error strictly smaller than $\delta$ guarantees that the empirical step function intersects the horizontal threshold $\alpha$ exclusively at the true percentile $M_\alpha$.

\begin{figure}[htbp]
    \centering
    \begin{tikzpicture}[scale=1.5,
        % Define styles
        cdf/.style={thick, blue},
        emp/.style={thick, dashed, red},
        guide/.style={dotted, gray},
        point/.style={circle, fill=blue, inner sep=1.5pt},
        hollow/.style={circle, draw=blue, fill=white, inner sep=1.5pt, thick}
        ]
        
        % Axes
        \draw[->] (-0.5,0) -- (5.5,0) node[right] {$x$};
        \draw[->] (0,-0.2) -- (0,2.8) node[above] {CDF};
        
        % Alpha line
        \draw[thick, gray, dashed] (0, 1.6) -- (5.5, 1.6);
        \node[left, font=\bfseries] at (0, 1.6) {$\alpha$};
        
        % True CDF Steps
        \draw[cdf] (-0.5, 0) -- (1, 0);
        \node[hollow] at (1, 0) {};
        
        \draw[cdf] (1, 1.0) -- (2.5, 1.0);
        \node[point] at (1, 1.0) {};
        \node[hollow] at (2.5, 1.0) {};
        
        \draw[cdf] (2.5, 2.0) -- (4, 2.0);
        \node[point] at (2.5, 2.0) {};
        \node[hollow] at (4, 2.0) {};
        
        \draw[cdf] (4, 2.3) -- (5.3, 2.3);
        \node[point] at (4, 2.3) {};
        
        % Label M_alpha
        \draw[guide] (2.5, 0) -- (2.5, 2.0);
        \node[below, font=\bfseries] at (2.5, 0) {$M_\alpha$};
        
        % Y-axis labels for CDF values
        \draw (-0.05, 1.0) -- (0.05, 1.0);
        \node[left] at (0, 1.0) {$\max_{x < M_\alpha} \text{B-CDF}_D(x)$};
        
        \draw (-0.05, 2.0) -- (0.05, 2.0);
        \node[left] at (0, 2.0) {$\text{B-CDF}_D(M_\alpha)$};
        
        % Gap annotations
        \draw[<->, thick, purple] (2.7, 1.0) -- (2.7, 1.6) node[midway, right] {Gap 1};
        \draw[<->, thick, purple] (2.7, 1.6) -- (2.7, 2.0) node[midway, right] {Gap 2};
        
        % Delta text positioned clearly at bottom right
        \node[left, purple] at (5.3, 0.4) {$\delta = \min(\text{Gap 1}, \text{Gap 2})$};

    \end{tikzpicture}
    \caption{Visualization of $\delta$. Gap 1 represents $\alpha - \max_{x < M_\alpha} \text{B-CDF}_D(x)$, and Gap 2 represents $\text{B-CDF}_D(M_\alpha) - \alpha$.}
    \label{fig:delta_gap}
\end{figure}

\begin{proposition} \label{prop:percentile_bias}
Let $M_\alpha$ be the true $\alpha$-percentile, and let $\delta > 0$ be the strictly positive gap isolating $M_\alpha$ defined above. The absolute finite-sample bias of the Basis $\alpha$-Percentile estimator is bounded such that:
\begin{equation*}
    \left| \mathbb{E}[\text{BP}_\alpha\text{-MPI}_{D,K}] - M_\alpha \right| \le \min\left( 1, \; 4 \exp\left( - \frac{2 K \delta^2 W_{\min}^2}{W_{\max}^2} \right) \right),
\end{equation*}
where $W_{\max} = \max_{p \in \mathcal{P}} W(p)$ and $W_{\min} = \min_{p \in \mathcal{P}} W(p) > 0$.
\end{proposition}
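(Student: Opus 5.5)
The bias bound will follow from a tail estimate on the event that the estimated percentile differs from $M_\alpha$, combined with the fact that every MPI lies in $[0,1]$. Write $\hat M_{\alpha,K} = \text{BP}_\alpha\text{-MPI}_{D,K}$. Since both $\hat M_{\alpha,K}$ and $M_\alpha$ are MPIs of violating cycles, $|\hat M_{\alpha,K} - M_\alpha| \le 1$ always. Hence
\begin{equation*}
\left|\mathbb{E}[\hat M_{\alpha,K}] - M_\alpha\right| \le \mathbb{E}\left|\hat M_{\alpha,K} - M_\alpha\right| \le \Pr\left(\hat M_{\alpha,K} \neq M_\alpha\right).
\end{equation*}
This already gives the trivial bound $1$ in the minimum. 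It remains to show $\Pr(\hat M_{\alpha,K}\neq M_\alpha)\le 4\exp(-2K\delta^2W_{\min}^2/W_{\max}^2)$.

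\textbf{Reduction to two evaluation points.} By the definition of $\delta$ and the remark following it, $\hat M_{\alpha,K} = M_\alpha$ whenever two conditions hold. First, $\text{B-CDF}_{D,K}(M_\alpha) \ge \alpha$. Second, $\text{B-CDF}_{D,K}(x^-) < \alpha$, where $x^- = \max\{\text{MPI}(C): \text{MPI}(C) < M_\alpha\}$ is the largest atom below $M_\alpha$. The CDF is a step function on finitely many atoms, so checking $x^-$ suffices; if no such atom exists, the second condition is automatic. Each condition can fail only if the empirical CDF deviates from $\text{B-CDF}_D$ by at least $\delta$ at the corresponding point. By a union bound, it therefore suffices to show that for each fixed $x$,
\begin{equation*}
\Pr\left(|\text{B-CDF}_{D,K}(x) - \text{B-CDF}_D(x)| \ge \delta\right) \le 2\exp\left(-2K\delta^2W_{\min}^2/W_{\max}^2\right).
\end{equation*}
Two points times two tails gives the constant $4$.

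\textbf{Linearising the ratio (main obstacle).} For fixed $x$, let $N(p) = \sum_{C\in V(\mathcal{B}_p)} w(C)\,\mathbb{I}(\text{MPI}(C)\le x)$. Then $0\le N(p)\le W(p)$, and $\text{B-CDF}_{D,K}(x) = \bar N/\bar W$ with sample means over the $K$ i.i.d.\ uniformly drawn paths. Because the sampling is uniform over $\mathcal{P}$, the population value is $F := \text{B-CDF}_D(x) = \mu_N/\mu_W$, so $\mathbb{E}[N(p) - F\,W(p)] = 0$. The key step is to rewrite the event as
\begin{equation*}
\left|\bar N/\bar W - F\right| \ge \delta \iff \left|\tfrac1K\textstyle\sum_k Z_k\right| \ge \delta\,\bar W, \qquad Z_k = N(p_k) - F\,W(p_k).
\end{equation*}
Since $\bar W \ge W_{\min}$, this event is contained in $\{|\bar Z| \ge \delta W_{\min}\}$. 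The $Z_k$ are i.i.d.\ and mean zero. They take values in an interval of length at most $W_{\max}$, because $N - FW \in [-F W, (1-F)W]$, which lies in $[-F W_{\max}, (1-F)W_{\max}]$. Hoeffding's inequality then gives
\begin{equation*}
\Pr(|\bar Z|\ge \delta W_{\min}) \le 2\exp\left(-2K\delta^2W_{\min}^2/W_{\max}^2\right).
\end{equation*}
This is exactly the required form.

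The delicate point is this linearisation. Applying Hoeffding to $\bar N$ and $\bar W$ separately would give a worse constant, so the plan is to use the centered variable $Z_k$ and the deterministic lower bound $W_{\min}$ on $\bar W$. That lower bound is guaranteed to be positive under the footnote's assumption that a 2-cycle violation exists. Finally, I will handle the boundary cases carefully. One is strict versus non-strict inequality at $M_\alpha$, where I will use the smallest-value convention for percentiles. The other is $\alpha = 1$, where Gap 2 is zero; there I will restrict to $\alpha$ for which $\delta>0$ as stated. Combining the bounds at the two evaluation points yields the claimed $\min(1, 4\exp(\cdot))$.
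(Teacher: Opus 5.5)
Your proposal is correct and follows the paper's proof essentially step for step. You use the centered variables $Z_k = N(p_k) - F\,W(p_k)$, the deterministic bound $\bar W \ge W_{\min}$, Hoeffding with range $W_{\max}$ (the paper rescales to $U_i = Z_i/W_{\max} + F \in [0,1]$, which is the same computation), and a union bound over $M_\alpha$ and the preceding atom to get the factor $4$; your explicit step $|\mathbb{E}[\hat M_{\alpha,K}] - M_\alpha| \le \Pr(\hat M_{\alpha,K} \ne M_\alpha)$ also fills in a passage the paper leaves implicit.
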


As the true population of cycle MPIs is discrete, the strictly positive vertical gap $\delta$ isolates the true percentile $M_\alpha$. Bias in our estimate for the percentile arises when the empirical CDF deviates from the true CDF by $\delta$ or more at critical points, causing the approximated CDF to cross the $\alpha$ threshold at the wrong step. The theoretical bound in Proposition \ref{prop:percentile_bias} provides a distribution-free worst-case guarantee that decays exponentially as the sample size $K$ grows. As it is a distribution-free bound evaluated over the extreme weights $W_{\min}$ and $W_{\max}$, it is conservative in finite samples. 

Like the exact OA-MPI, computing any $\text{OP}_\alpha\text{-MPI}_D$ directly is an NP-hard problem \citep{smeulders2013note}. We overcome this computational issue using the consistent estimator for the B-CDF, which allows us to construct the empirical CDF in polynomial time $\mathcal{O}(K n^2)$.

\section{Empirical Application} \label{sec:empirical}

To validate our estimators in practice, I replicate the empirical analysis of \citet{smeulders2013note} alongside estimates of the mean and median MPI using the Stanford Basket Dataset. The dataset tracks the grocery purchases of 494 households over 26 weeks across 375 distinct goods. 396 households strictly violate GARP. For each GARP-violating household, I compute: (i) the minimum and maximum MPI bounds of \citet{smeulders2013note}; (ii) the approximated average and median MPI (by focusing on cycles of lengths 2, 3, and 4) of \citet{echenique2011money}; and (iii) the basis average and median MPI estimators ($\text{BA-MPI}_{D,K}$ and $\text{BP}_{50}\text{-MPI}_{D,K}$). I also compute the upper bounds on the finite-sample bias of these estimators from Proposition \ref{prop:mean_bias} and Proposition \ref{prop:percentile_bias} replacing population parameters with sample analogues. Additionally, recognizing that these analytical bounds are conservative worst-case scenarios, I quantify finite-sample bias using non-parametric bootstrap by drawing 1,000 bootstrap resamples (with replacement) from the set of sampled H-paths.\footnote{The estimation halts once the running estimator stabilizes (i.e., changes by less than $10^{-3}$ over 200 consecutive evaluations). To prevent premature termination, I enforce a mandatory ``burn-in'' phase of 1,000 evaluations where the convergence criterion is disabled. The absolute maximum limit is set to 5,000 evaluations, but across the entire estimation exercise, convergence occurs much earlier, taking an average of only 1,227 iterations per household.} 

Table \ref{tab:mpi_stats} reports the descriptive statistics of these indices across all 396 GARP-violating households. Typically, revealed preference violations are heavily concentrated in shorter cycles (predominantly 2-cycles and 3-cycles). As shorter cycles dominate the total count of violations, the approximated mean MPI of \citet{echenique2011money} (which truncates at $k \le 4$) aligns closely with the Basis Average MPI ($0.0610$ vs $0.0609$), as also shown in Figure \ref{fig:cdf_plots} (left). However, the median MPIs exhibit meaningful differences: the Basis Median MPI reports noticeably lower values than the approximated median ($0.0557$ vs $0.0592$ in Table \ref{tab:mpi_stats}; $0.0471$ vs $0.0497$ for the household-level median; see Figure \ref{fig:cdf_plots}, right). This indicates that the Basis MPI captures longer, higher-order cycles that have lower money pump intensities. While these larger cycles do not substantially shift the overall expenditure-weighted average, they shift the percentile distribution. Thus, the estimator avoids the upward distortion in median severity induced by heuristics that rely on shorter cycles. Visually, the basis mean MPI fits symmetrically between the minimum and maximum bounds, providing further support to the observation of \citet{smeulders2013note} that those bounds are informative for the mean MPI, although less so for the median MPI. 

\begin{table}[ht]
    \centering
    \footnotesize
    \setlength{\tabcolsep}{4.5pt}
    \caption{Descriptive Statistics of Money Pump Indices across GARP-violating Households ($n=396$)}
    \label{tab:mpi_stats}
    \vspace{0.5em}
    \begin{tabular}{l c c c c c c c}
    \toprule
    & \multicolumn{3}{c}{\textbf{Min/Max MPI}} & \multicolumn{2}{c}{\textbf{Approximate MPI}} & \multicolumn{2}{c}{\textbf{Basis MPI}} \\
    \cmidrule(lr){2-4} \cmidrule(lr){5-6} \cmidrule(lr){7-8}
    \textbf{Statistic} & Min & Max & Range & Average & Median & Average & Median \\
    \midrule
    Average & 0.0341 & 0.0936 & 0.0595 & 0.0610 & 0.0592 & 0.0611 & 0.0551 \\
    Standard deviation & 0.0288 & 0.0616 & 0.0604 & 0.0358 & 0.0369 & 0.0363 & 0.0360 \\
    Number of zeros & - & - & 75 & - & - & - & - \\
    Minimum & 0.0002 & 0.0048 & 0.0000 & 0.0048 & 0.0048 & 0.0048 & 0.0002 \\
    First quartile & 0.0154 & 0.0489 & 0.0116 & 0.0356 & 0.0326 & 0.0356 & 0.0301 \\
    Median & 0.0268 & 0.0797 & 0.0443 & 0.0551 & 0.0497 & 0.0551 & 0.0457 \\
    Third quartile & 0.0425 & 0.1274 & 0.0863 & 0.0809 & 0.0789 & 0.0805 & 0.0738 \\
    Maximum & 0.2782 & 0.4010 & 0.3350 & 0.2782 & 0.2782 & 0.2782 & 0.2782 \\
    \bottomrule
\end{tabular}

    \vspace{0.4em}
    \begin{minipage}{0.95\textwidth}
    \scriptsize
    \textbf{Notes:} This table reports descriptive statistics of the Money Pump Index (MPI) across all 396 households that violate GARP in the Stanford Basket Dataset ($T=26$). \textit{Min/Max MPI} bounds are computed using the polynomial-time algorithm of \citet{smeulders2013note}; \textit{Range} is the difference between Max and Min. \textit{Approximate MPI} follows \citet{echenique2011money} by enumerating all cycles up to length $k=4$. \textit{Basis MPI} reports the Basis Average ($\text{BA-MPI}_{D,K}$) and Basis Median ($\text{BP}_{50}\text{-MPI}_{D,K}$) estimators proposed in this paper, computed via randomly sampled H-path cycle bases ($K \approx 1,227$ iterations per household). For 75 households, the Min and Max bounds coincide (Range = 0), representing the exact true mean MPI.
    \end{minipage}
\end{table}

\begin{figure}[ht]
    \centering
    \includegraphics[width=0.48\textwidth]{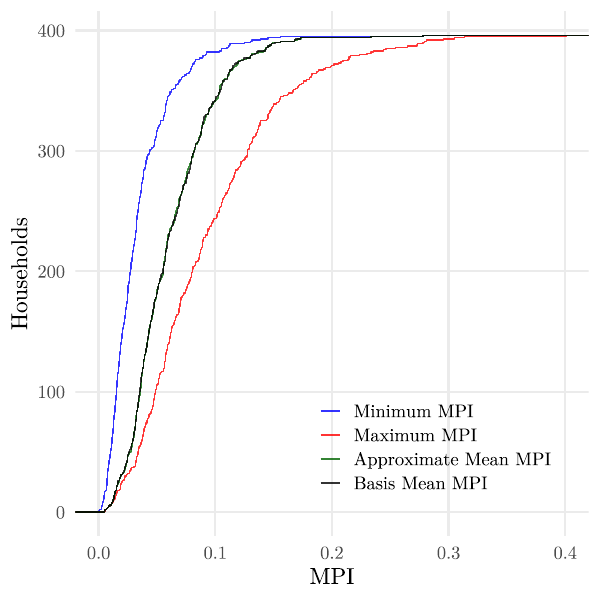} \hfill
    \includegraphics[width=0.48\textwidth]{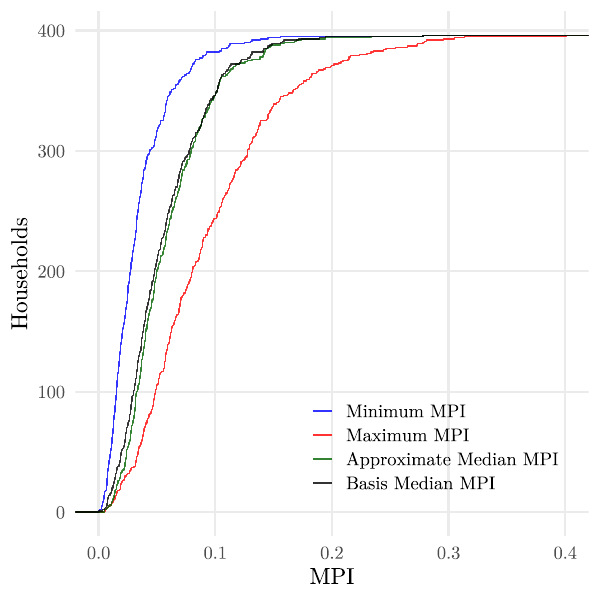}
    \caption{Empirical Cumulative Distribution Functions of Mean (left) and Median (right) Money Pump Indices across GARP-violating Households ($n=396$). The left panel compares the Min/Max bounds of \citet{smeulders2013note}, the approximate mean MPI ($k \le 4$) of \citet{echenique2011money}, and the proposed Basis Mean MPI ($\text{BA-MPI}_{D,K}$). The right panel compares the approximate median MPI and Basis Median MPI ($\text{BP}_{50}\text{-MPI}_{D,K}$).}
    \label{fig:cdf_plots}
\end{figure}

Table \ref{tab:bias_bounds} reports the finite-sample empirical bias and theoretical bounds of the basis estimators across all households. For the mean MPI, the theoretical upper bounds on bias are tight for the vast majority of households (median $0.0008$). In contrast, the theoretical upper bound on the median MPI bias is conservative (median $3.9983$, bounded by $1.0000$), reflecting that it is a distribution-free worst-case guarantee whose analytical bound becomes conservative when basis weights exhibit wide dispersion. From Figure \ref{fig:bias_bounds_hist}, the non-parametric bootstrap empirical bias provides the operational verification: both the average bias for the mean MPI ($0.0001$) and the median MPI ($0.0037$) are tightly clustered around zero. This demonstrates that in practice, the basis estimators achieve near-zero finite-sample bias with modest draws.\footnote{In terms of computational performance, the comparison was run on a computer with an Apple M5 processor and 32GB RAM. Calculating the Min/Max bounds, the approximated MPI, and the Basis MPI (excluding bootstrap) took 30.5 seconds, 48.1 seconds, and 208.8 seconds respectively in native R. Using the \texttt{Rcpp} package to integrate R and C++, the times dropped to 0.07 seconds, 0.64 seconds, and 37.6 seconds. The performance improvements in the C++ implementations over base R come from porting the core algorithms to an environment which is optimized for intensive looping and memory management.}

\begin{table}[ht]
    \centering
    \footnotesize
    \setlength{\tabcolsep}{10pt}
    \caption{Finite-Sample Bias and Theoretical Bounds of Basis MPI Estimators}
    \label{tab:bias_bounds}
    \vspace{0.5em}
    \begin{tabular}{l c c c c}
    \toprule
    & \multicolumn{2}{c}{\textbf{Empirical Bias}} & \multicolumn{2}{c}{\textbf{Theoretical Bounds}} \\
    \cmidrule(lr){2-3} \cmidrule(lr){4-5}
    \textbf{Statistic} & Average & Median & Average & Median \\
    \midrule
    Average & 0.0001 & 0.0039 & 0.0262 & 2.3844 \\
    Standard deviation & 0.0005 & 0.0141 & 0.2542 & 1.9635 \\
    Number of zeros & 250 & 204 & 161 & 74 \\
    Minimum & -0.0023 & -0.0271 & 0.0000 & 0.0000 \\
    First quartile & -0.0000 & -0.0000 & 0.0000 & 0.0005 \\
    Median & 0.0000 & 0.0000 & 0.0008 & 3.9983 \\
    Third quartile & 0.0001 & 0.0030 & 0.0033 & 3.9994 \\
    Maximum & 0.0053 & 0.1571 & 4.7621 & 4.0000 \\
    \bottomrule
\end{tabular}

    \vspace{0.4em}
    \begin{minipage}{0.92\textwidth}
    \scriptsize
    \textbf{Notes:} This table reports summary statistics of the finite-sample empirical bias and analytical theoretical bounds of the Basis Average and Basis Median MPI estimators across all 396 GARP-violating households. \textit{Empirical Bias} is evaluated via non-parametric bootstrap drawing 1,000 resamples (with replacement) from the sampled H-path cycle bases. \textit{Theoretical Bounds} report the analytical upper bounds from Proposition \ref{prop:mean_bias} (Average) and Proposition \ref{prop:percentile_bias} (Median), calculated using sample analogues of population moments.
    \end{minipage}
\end{table}

\begin{figure}[ht]
    \centering
    \includegraphics[width=1\textwidth]{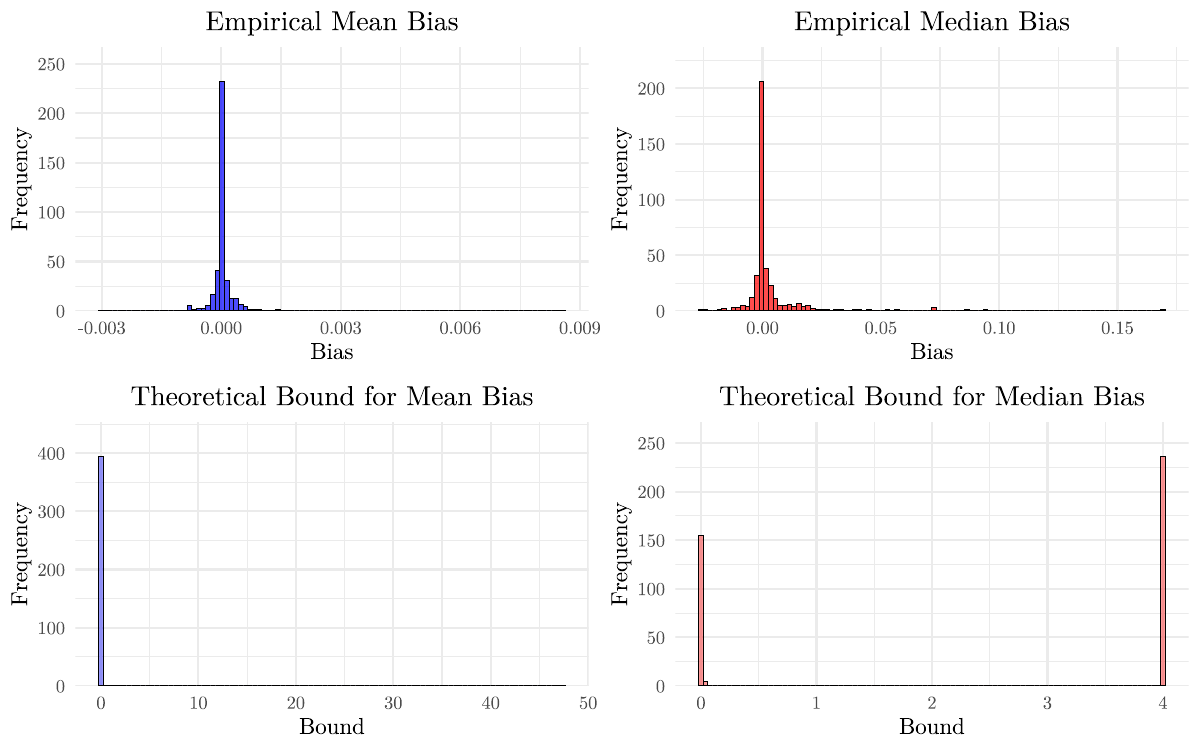}
    \caption{Histograms of Finite-Sample Empirical Bias and Analytical Theoretical Bounds across all 396 GARP-violating Households. The top row displays the empirical bias evaluated via 1,000 bootstrap resamples for the Basis Average (left) and Basis Median (right) MPI estimators. The bottom row displays the corresponding analytical finite-sample upper bounds from Proposition \ref{prop:mean_bias} (left) and Proposition \ref{prop:percentile_bias} (right).}
    \label{fig:bias_bounds_hist}
\end{figure}

For 75 households, the minimum and maximum MPI bounds coincide; hence, this bound is their exact $\text{OA-MPI}_D$ \citep{smeulders2013note}. When computing the basis estimators for these 75 households, the results are identical to the exact MPI to within $10^{-6}$, providing direct empirical validation of our estimator against the true MPI values.

\section{Conclusion} \label{sec:conclusion}

This paper proposes a computationally tractable approach to calculate the mean and median Money Pump Index (MPI) of \citet{echenique2011money}. Without relying on shorter cycle approximations or min-max bounds to avoid NP-hard complexity issues, this approach conceptualizes choice violations as a directed graph. The violations are then projected onto the graph's fundamental cycle bases through the random sampling of Hamiltonian paths from which the bases are derived. Theoretical results establish that these estimators are consistent and computable in polynomial time. When applied to the Stanford Basket Dataset, it is shown that any finite-sample bias is negligible. Additionally, the basis mean MPI is in strong agreement with the OA-MPI, confirming that longer cycles do not contribute as much as shorter cycles, although they do have some effect on the median MPI. Finally, where the minimum and maximum MPI bounds are the same, the basis estimators also coincide. By reducing the computational burden, this methodology allows us to accurately estimate the true MPI tractably.

\clearpage
\appendix
\section{Proofs} \label{sec:appendix}

\noindent\textbf{Proof of Theorem \ref{thm:ba_equiv}.}

Let $V$ be the finite set of all strictly GARP-violating cycles in the complete directed graph. By definition:
\begin{equation*}
    \text{OA-MPI}_D = \frac{1}{|V|} \sum_{C \in V} \text{MPI}(C).
\end{equation*}

Let $\mathcal{P}$ denote the set of all $n!$ possible H-paths. Each path $p \in \mathcal{P}$ generates a unique fundamental cycle basis $\mathcal{B}_p$. Let $V(\mathcal{B}_p) = \mathcal{B}_p \cap V$ denote the subset of generated cycles that strictly violate GARP in $\mathcal{B}_p$. 

Recall the definition of the Basis Average MPI:
\begin{equation*} 
    \text{BA-MPI}_D = \frac{ \sum_{p \in \mathcal{P}} \sum_{C \in V(\mathcal{B}_p)} \frac{\text{MPI}(C)}{f(n, |C|)} }{ \sum_{p \in \mathcal{P}} \sum_{C \in V(\mathcal{B}_p)} \frac{1}{f(n, |C|)} }.
\end{equation*}

We first consider the numerator of $\text{BA-MPI}_D$. Let $\mathbb{I}(C \in \mathcal{B}_p)$ be an indicator function that equals $1$ if cycle $C$ is in the basis generated by $p$, and $0$ otherwise. This allows us to rewrite the numerator as:
\begin{equation*}
    \sum_{p \in \mathcal{P}} \sum_{C \in V(\mathcal{B}_p)} \frac{\text{MPI}(C)}{f(n, |C|)} = \sum_{p \in \mathcal{P}} \sum_{C \in V} \mathbb{I}(C \in \mathcal{B}_p) \frac{\text{MPI}(C)}{f(n, |C|)}.
\end{equation*}
As the limits of the inner sum are now independent of the limits of the outer sum, we can rewrite the numerator as:
\begin{align}
    \sum_{p \in \mathcal{P}} \left( \sum_{C \in V} \mathbb{I}(C \in \mathcal{B}_p) \frac{\text{MPI}(C)}{f(n, |C|)} \right) 
    &= \sum_{C \in V} \left( \sum_{p \in \mathcal{P}} \mathbb{I}(C \in \mathcal{B}_p) \frac{\text{MPI}(C)}{f(n, |C|)} \right) \nonumber \\[0.85em]
    &= \sum_{C \in V} \left( \sum_{p \in \mathcal{P}} \mathbb{I}(C \in \mathcal{B}_p) \right) \frac{\text{MPI}(C)}{f(n, |C|)}. \label{eq:swapped_num}
\end{align}

Notice that Equation \eqref{eq:swapped_num} depends on $\sum_{p \in \mathcal{P}} \mathbb{I}(C \in \mathcal{B}_p)$, which is the marginal frequency of cycle $C$ across the ensemble of generated bases. By construction of the fundamental basis cycles, we have that $\sum_{p \in \mathcal{P}} \mathbb{I}(C \in \mathcal{B}_p) = f(n, |C|)$ and so the numerator reduces to $\sum_{C \in V} \text{MPI}(C)$.

Applying the same logic and summation swap to the denominator yields:
\begin{align}
    \sum_{p \in \mathcal{P}} \sum_{C \in V(\mathcal{B}_p)} \frac{1}{f(n, |C|)} 
    &= \sum_{p \in \mathcal{P}} \sum_{C \in V} \mathbb{I}(C \in \mathcal{B}_p) \frac{1}{f(n, |C|)} \nonumber \\[0.85em]
    &= \sum_{C \in V} \left( \sum_{p \in \mathcal{P}} \mathbb{I}(C \in \mathcal{B}_p) \right) \frac{1}{f(n, |C|)} \nonumber \\[0.85em]
    &= \sum_{C \in V} f(n, |C|) \frac{1}{f(n, |C|)} = \sum_{C \in V} 1 = |V|.
\end{align}

Dividing the reduced numerator by the reduced denominator yields:
\begin{equation*}
    \frac{1}{|V|} \sum_{C \in V} \text{MPI}(C) = \text{OA-MPI}_{D},
\end{equation*}
which completes the proof. \qed

\vspace{1em}\noindent\textbf{Proof of Proposition \ref{prop:mean_MPI}.}

Let the BA-MPI estimator be defined as:
\begin{equation*}
    \text{BA-MPI}_{D,K} = \frac{\bar{M}_K}{\bar{W}_K}.
\end{equation*}
For the $i$-th drawn path $p_i$, we extract its fundamental cycle basis $\mathcal{B}_{p_i}$ containing the violating cycles $V(\mathcal{B}_{p_i})$ to define $\bar{M}_K$ and $\bar{W}_K$ as the sample means of the frequency-adjusted cycle costs and weights, computed over $K$ randomly drawn H-paths:
\begin{align}
    \bar{M}_K &= \frac{1}{K} \sum_{i=1}^K \left[ \sum_{C \in V(\mathcal{B}_{p_i})} \frac{\text{MPI}(C)}{f(n, |C|)} \right] = \frac{1}{K} \sum_{i=1}^K M(p_i), \\[0.85em]
    \bar{W}_K &= \frac{1}{K} \sum_{i=1}^K \left[ \sum_{C \in V(\mathcal{B}_{p_i})} \frac{1}{f(n, |C|)} \right] = \frac{1}{K} \sum_{i=1}^K W(p_i),
\end{align}
where $M(p_i)$ and $W(p_i)$ denote the frequency-adjusted sum of strictly violating cycle costs and the corresponding weights for that specific basis, respectively. Note that the $K$ paths are drawn uniformly at random with replacement from the $n!$ possible paths, hence the drawn sequences are independent and identically distributed (i.i.d.).

By Theorem \ref{thm:ba_equiv}, the true population expected value of the adjusted numerator over a single random path $p$ is $\mathbb{E}[M(p)] = \frac{1}{n!} \sum_{p \in \mathcal{P}} M(p) = \frac{1}{n!} \sum_{C \in V} \text{MPI}(C)$. Similarly, the expected value of the adjusted denominator is $\mathbb{E}[W(p)] = \frac{1}{n!} \sum_{p \in \mathcal{P}} W(p) = \frac{|V|}{n!}$.

As the samples are i.i.d. and trivially bounded (finite graph), the Strong Law of Large Numbers (SLLN) guarantees that as the number of sampled paths $K \to \infty$:
\begin{equation*}
    \mathbb{P}\left(\lim_{K \to \infty} \bar{M}_K = \mathbb{E}[M(p)]\right) = 1 \quad \text{and} \quad \mathbb{P}\left(\lim_{K \to \infty} \bar{W}_K = \mathbb{E}[W(p)]\right) = 1.
\end{equation*}

As the denominator limit is non-zero, by the Continuous Mapping Theorem:
\begin{equation*}
    \text{BA-MPI}_{D,K} \xrightarrow{\text{a.s.}} \frac{\mathbb{E}[M(p)]}{\mathbb{E}[W(p)]} = \frac{ \frac{1}{n!} \sum_{C \in V} \text{MPI}(C) }{ \frac{|V|}{n!} } = \frac{1}{|V|} \sum_{C \in V} \text{MPI}(C) = \text{OA-MPI}_{D},
\end{equation*}
as required. \qed

\vspace{1em}\noindent\textbf{Proof of Proposition \ref{prop:mean_bias}.}

Following the definitions in the proof of Proposition \ref{prop:mean_MPI}, the difference between the expected value of the estimator and the true ratio is given by:
\begin{align*}
    \mathbb{E}\left[ \frac{\bar{M}_K}{\bar{W}_K} \right] - \frac{\mu_M}{\mu_W} 
    &= \mathbb{E}\left[ \frac{\bar{M}_K \mu_W - \bar{W}_K \mu_M}{\bar{W}_K \mu_W} \right] \\[0.85em]
    &= \mathbb{E}\left[ \frac{\bar{M}_K \mu_W - \bar{W}_K \mu_M}{\bar{W}_K \mu_W} \right] -  \frac{1}{\mu_W^2}\underbrace{\mathbb{E}[\bar{M}_K \mu_W - \bar{W}_K \mu_M] }_{=(\mu_M \mu_W - \mu_W \mu_M) = 0} \\[0.85em]
    &= \mathbb{E}\left[  \frac{(\bar{M}_K \mu_W - \bar{W}_K \mu_M)(\mu_W - \bar{W}_K)}{\bar{W}_K \mu_W^2} \right] \\[0.85em]
    &= \frac{-1}{\mu_W^2} \mathbb{E}\left[ \frac{ \mu_W (\bar{M}_K - \mu_M)(\bar{W}_K - \mu_W) - \mu_M (\bar{W}_K - \mu_W)^2 }{\bar{W}_K} \right].
\end{align*}

To bound the magnitude of this expected bias, we take the absolute value of both sides. As we focus on GARP violators, we guarantee that $\bar{W}_K \ge W_{\min} > 0$ for all valid bases. Using Jensen's inequality (JI) and the triangle inequality (TI) yields:
\begin{align*}
    &\left| \mathbb{E}[\text{BA-MPI}_{D,K}] - \frac{\mu_M}{\mu_W} \right| \\
    &\quad= \left| \mathbb{E}\left[ \frac{-1}{\mu_W^2 \bar{W}_K} \Big( \mu_W (\bar{M}_K - \mu_M)(\bar{W}_K - \mu_W) - \mu_M (\bar{W}_K - \mu_W)^2 \Big) \right] \right| \\[0.85em]
    &\quad\underset{\text{JI}}{\le} \mathbb{E}\left[ \left| \frac{-1}{\mu_W^2 \bar{W}_K} \Big( \mu_W (\bar{M}_K - \mu_M)(\bar{W}_K - \mu_W) - \mu_M (\bar{W}_K - \mu_W)^2 \Big) \right| \right] \\[0.85em]
    &\quad\le \frac{1}{\mu_W^2 W_{\min}} \mathbb{E}\left[ \left| \mu_W (\bar{M}_K - \mu_M)(\bar{W}_K - \mu_W) - \mu_M (\bar{W}_K - \mu_W)^2 \right| \right] \\[0.85em]
    &\quad\underset{\text{TI}}{\le} \frac{1}{\mu_W^2 W_{\min}} \mathbb{E}\left[ \left| \mu_W (\bar{M}_K - \mu_M)(\bar{W}_K - \mu_W) \right| + \left| \mu_M (\bar{W}_K - \mu_W)^2 \right| \right] \\[0.85em]
    &\quad= \frac{1}{\mu_W^2 W_{\min}} \bigg( \mu_W \mathbb{E}\left[ \left| (\bar{M}_K - \mu_M)(\bar{W}_K - \mu_W) \right| \right] + \mu_M \mathbb{E}\left[ (\bar{W}_K - \mu_W)^2 \right] \bigg).
\end{align*}

By the Cauchy-Schwarz inequality, we have:
\begin{equation*}
    \mathbb{E}\left[ \left| (\bar{M}_K - \mu_M)(\bar{W}_K - \mu_W) \right| \right] \le \sqrt{ \text{Var}(\bar{M}_K) \text{Var}(\bar{W}_K) }.
\end{equation*}

As the $K$ sampled paths are i.i.d., $\text{Var}(\bar{M}_K) = \frac{\sigma_M^2}{K}$ and $\text{Var}(\bar{W}_K) = \frac{\sigma_W^2}{K}$. Therefore:
\begin{equation*}
    \sqrt{ \text{Var}(\bar{M}_K) \text{Var}(\bar{W}_K) } = \frac{\sigma_M \sigma_W}{K} \quad \text{and} \quad \mathbb{E}\big[ (\bar{W}_K - \mu_W)^2 \big] = \text{Var}(\bar{W}_K) = \frac{\sigma_W^2}{K}.
\end{equation*}

Substituting these into our bound yields:
\begin{equation*}
    \left| \mathbb{E}[\text{BA-MPI}_{D,K}] - \frac{\mu_M}{\mu_W} \right| \le \frac{1}{K} \left[ \frac{\mu_W \sigma_M \sigma_W + \mu_M \sigma_W^2}{\mu_W^2 W_{\min}} \right].
\end{equation*}
Furthermore, as $\text{BA-MPI}_{D,K} \in [0, 1]$ and $\mu_M / \mu_W \in [0, 1]$, the absolute difference cannot exceed 1. Combining these gives:
\begin{equation*}
    \left| \mathbb{E}[\text{BA-MPI}_{D,K}] - \frac{\mu_M}{\mu_W} \right| \le \min\left( 1, \; \frac{1}{K} \left[ \frac{\mu_W \sigma_M \sigma_W + \mu_M \sigma_W^2}{\mu_W^2 W_{\min}} \right] \right),
\end{equation*}
as required. \qed

\vspace{1em}\noindent\textbf{Proof of Theorem \ref{thm:bp_equiv}.}

Recall from Section \ref{sec:ensemble} that the basis cumulative distribution function is defined as:
\begin{equation*}
    \text{B-CDF}_D(x) = \frac{ \sum_{p \in \mathcal{P}} \sum_{C \in V(\mathcal{B}_p)} w(C) \mathbb{I}(\text{MPI}(C) \le x) }{ \sum_{p \in \mathcal{P}} \sum_{C \in V(\mathcal{B}_p)} w(C) }.
\end{equation*}
Following the exact summation-interchange logic of Theorem \ref{thm:ba_equiv}, substituting $w(C) = 1/f(n, |C|)$ into the numerator yields:
\begin{align*}
    \sum_{p \in \mathcal{P}} \sum_{C \in V(\mathcal{B}_p)} w(C) \mathbb{I}(\text{MPI}(C) \le x) 
    &= \sum_{C \in V} \left( \sum_{p \in \mathcal{P}} \mathbb{I}(C \in \mathcal{B}_p) \right) \frac{\mathbb{I}(\text{MPI}(C) \le x)}{f(n, |C|)} \\[0.85em]
    &= \sum_{C \in V} f(n, |C|) \frac{\mathbb{I}(\text{MPI}(C) \le x)}{f(n, |C|)} = \sum_{C \in V} \mathbb{I}(\text{MPI}(C) \le x).
\end{align*}
As established in Equation \eqref{eq:swapped_num}, the denominator reduces to $|V|$. Therefore:
\begin{equation*}
    \text{B-CDF}_D(x) = \frac{1}{|V|} \sum_{C \in V} \mathbb{I}(\text{MPI}(C) \le x) = \text{O-CDF}_D(x) \quad \text{for all } x \in \mathbb{R}.
\end{equation*}
As the two cumulative distribution functions are identically equal for all $x$, their generalized inverses coincide:
\begin{equation*}
    \text{BP}_\alpha\text{-MPI}_D = \min \{x : \text{B-CDF}_D(x) \ge \alpha\} = \min \{x : \text{O-CDF}_D(x) \ge \alpha\} = \text{OP}_\alpha\text{-MPI}_D,
\end{equation*}
as required. \qed

\vspace{1em}\noindent\textbf{Proof of Proposition \ref{prop:percentile_MPI}.}

Let $\mathcal{X}$ denote the finite support of the cycle MPI distribution. For any fixed $x \in \mathcal{X}$, by the Strong Law of Large Numbers and Continuous Mapping Theorem:
\begin{equation*}
    \text{B-CDF}_{D,K}(x) = \frac{\frac{1}{K}\sum_{i=1}^K \sum_{C \in V(\mathcal{B}_i)} w(C) \mathbb{I}(\text{MPI}(C) \le x)}{\frac{1}{K}\sum_{i=1}^K \sum_{C \in V(\mathcal{B}_i)} w(C)} \xrightarrow{a.s.} \text{B-CDF}_D(x).
\end{equation*}
Since $|\mathcal{X}| < \infty$, pointwise convergence implies uniform convergence almost surely:
\begin{equation*}
    \sup_{x \in \mathcal{X}} |\text{B-CDF}_{D,K}(x) - \text{B-CDF}_D(x)| \xrightarrow{a.s.} 0.
\end{equation*}
Let $M_\alpha = \min \{x \in \mathcal{X} : \text{B-CDF}_D(x) \ge \alpha\}$ be the true $\alpha$-percentile. Assuming strictly positive probability mass isolating $M_\alpha$, $\text{B-CDF}_D(M_\alpha) > \alpha$. As $\mathcal{X}$ is finite, recall the strictly positive vertical margin $\delta > 0$ defined in Section \ref{sec:ensemble} (illustrated in Figure \ref{fig:delta_gap}):
\begin{equation*}
    \delta = \min \left( \alpha - \max_{x < M_\alpha} \text{B-CDF}_D(x), \; \text{B-CDF}_D(M_\alpha) - \alpha \right) > 0.
\end{equation*} 

By uniform convergence, $\exists K_0$ almost surely such that $\forall K \ge K_0$:
\begin{equation*}
    \sup_{x \in \mathcal{X}} |\text{B-CDF}_{D,K}(x) - \text{B-CDF}_D(x)| < \delta.
\end{equation*}
For all $K \ge K_0$, we have the following 2 inequalities:
\begin{align*}
    \text{B-CDF}_{D,K}(M_\alpha) &> \text{B-CDF}_D(M_\alpha) - \delta \\[0.85em]
      &\ge \text{B-CDF}_D(M_\alpha) - (\text{B-CDF}_D(M_\alpha) - \alpha) = \alpha, \\[0.85em]
    \max_{x < M_\alpha} \text{B-CDF}_{D,K}(x) &< \max_{x < M_\alpha} \text{B-CDF}_D(x) + \delta \\[0.85em]
    &\le \max_{x < M_\alpha} \text{B-CDF}_D(x) + \big( \alpha - \max_{x < M_\alpha} \text{B-CDF}_D(x) \big) = \alpha.
\end{align*}
Consequently, $\forall K \ge K_0$, from the first inequality, the empirical CDF is strictly above the $\alpha$ line when evaluated at $M_\alpha$, and from the second inequality, the empirical CDF is strictly below the $\alpha$ line for every single point before $M_\alpha$.

Recall that the $\alpha$-percentile is the minimum value $x$ where the CDF crosses or touches the $\alpha$ threshold:
\begin{equation*}
    \text{BP}_\alpha\text{-MPI}_{D,K} = \min \{x \in \mathcal{X} : \text{B-CDF}_{D,K}(x) \ge \alpha\}.
\end{equation*}
To find the smallest $x$ that satisfies $\text{B-CDF}_{D,K}(x) \ge \alpha$, it cannot be any point before $M_\alpha$, as for all $x < M_\alpha$, the CDF is strictly less than $\alpha$. However, at $M_\alpha$, the CDF is strictly greater than $\alpha$, with $M_\alpha$ clearly being the lowest value that satisfies $\text{B-CDF}_{D,K}(x) \ge \alpha$. Thus, for all $K \ge K_0$:
\begin{equation*}
    \text{BP}_\alpha\text{-MPI}_{D,K} = M_\alpha.
\end{equation*}
As the above holds for all $K \ge K_0$ (which occurs almost surely), then:
\begin{equation*}
    \text{BP}_\alpha\text{-MPI}_{D,K} \xrightarrow{a.s.} M_\alpha.
\end{equation*}
\qed

\vspace{1em}\noindent\textbf{Proof of Proposition \ref{prop:percentile_bias}.}

Denote the original CDF as $\text{B-CDF}_D(x) = \frac{\mu_Y(x)}{\mu_W}$. We define the sample analogues $Y_i(x) = \sum_{C \in V(\mathcal{B}_i)} w(C) \mathbb{I}(\text{MPI}(C) \le x)$ and $W_i = \sum_{C \in V(\mathcal{B}_i)} w(C)$ for $i=1,\dots,K$ to define the empirical CDF as $\text{B-CDF}_{D,K}(x) = \frac{\frac{1}{K}\sum_{i=1}^K Y_i(x)}{\frac{1}{K}\sum_{i=1}^K W_i} = \frac{\bar{Y}(x)}{\bar{W}}$. We define the following random variables:
\begin{equation*}
    Z_i(x) = Y_i(x) - \text{B-CDF}_D(x) W_i.
\end{equation*}
The uniform independent sampling of $K$ bases implies $Z_i(x)$ are i.i.d. We first note that the range of $Z_i(x)$ is $W_i$, implying the largest range for any $Z_i(x)$ is $W_{\max}$. Secondly, by definition, $\mathbb{E}[Z_i(x)] = 0$. Thirdly, we define $U_i(x) = Z_i(x) / W_{\max} + \text{B-CDF}_D(x)$. It follows that $U_i(x) \in [0, 1]$ and $\mathbb{E}[U_i(x)] = \text{B-CDF}_D(x)$.

Given $\bar{W} \ge W_{\min} > 0$, we consider a $\delta$ deviation between the estimator of the basis CDF and the true CDF:
\begin{align*}
    |\text{B-CDF}_{D,K}(x) - \text{B-CDF}_D(x)| \ge \delta 
    &\iff \left| \frac{\bar{Y}(x)}{\bar{W}} - \text{B-CDF}_D(x) \right| \ge \delta \\[0.85em]
    &\iff \left| \bar{Y}(x) - \text{B-CDF}_D(x) \bar{W} \right| \ge \delta \bar{W} \\[0.85em]
    &\iff |\bar{Z}(x)| \ge \delta \bar{W} \\[0.85em]
    &\implies |\bar{Z}(x)| \ge \delta W_{\min} \\[0.85em]
    &\iff \left| \frac{\bar{Z}(x)}{W_{\max}} \right| \ge \delta \frac{W_{\min}}{W_{\max}} \\[0.85em]
    &\iff |\bar{U}(x) - \text{B-CDF}_D(x)| \ge \delta \frac{W_{\min}}{W_{\max}}.
\end{align*}
As $U_1(x), \dots, U_K(x)$ are independent random variables strictly bounded in $[0, 1]$, we apply the standard sum formulation of Hoeffding's inequality for $S_K = \sum_{i=1}^K U_i(x) = K \bar{U}(x)$ with deviation $\epsilon > 0$:
\begin{align*}
    \mathbb{P}(|S_K - \mathbb{E}[S_K]| \ge \epsilon) &\le 2 \exp\left(-\frac{2 \epsilon^2}{K}\right), \\[0.85em]
    \mathbb{P}\left(\big|K \bar{U}(x) - K \mathbb{E}[\bar{U}(x)]\big| \ge \epsilon\right) &\le 2 \exp\left(-\frac{2 \epsilon^2}{K}\right), \\[0.85em]
    \mathbb{P}\left(|\bar{U}(x) - \text{B-CDF}_D(x)| \ge \frac{\epsilon}{K}\right) &\le 2 \exp\left(-\frac{2 \epsilon^2}{K}\right),
\end{align*}
where $\mathbb{E}[\bar{U}(x)] = \text{B-CDF}_D(x)$. Consequently, substituting $\epsilon = K \delta \frac{W_{\min}}{W_{\max}}$ and using the fact that the event $|\text{B-CDF}_{D,K}(x) - \text{B-CDF}_D(x)| \ge \delta $ implies $|\bar{U}(x) - \text{B-CDF}_D(x)| \ge \delta \frac{W_{\min}}{W_{\max}}$, we have 
\begin{align*}
    \mathbb{P}\left(|\text{B-CDF}_{D,K}(x) - \text{B-CDF}_D(x)| \ge \delta\right) 
    &\le \mathbb{P}\left(|\bar{U}(x) - \text{B-CDF}_D(x)| \ge \delta \frac{W_{\min}}{W_{\max}}\right) \\[0.85em]
    &\le 2 \exp\left( - \frac{2 K \delta^2 W_{\min}^2}{W_{\max}^2} \right).
\end{align*}

Let $\mathcal{E}$ denote the event when $\text{BP}_\alpha\text{-MPI}_{D,K} \neq M_\alpha$ (i.e., an error event). By Proposition \ref{prop:percentile_MPI}, the estimator exactly isolates $M_\alpha$ provided the CDF deviations are strictly bounded by $\delta$ at $M_\alpha$ and $x^* = \max \{x \in \mathcal{X} \mid x < M_\alpha\}$. Thus, $\mathcal{E}$ only occurs if the deviation exceeds $\delta$ at $M_\alpha$ or at $x^*$ which implies the probability of $\mathcal{E}$ cannot exceed the sum of those events:
\begin{align*}
    \mathbb{P}(\mathcal{E}) &\le \mathbb{P}\left(|\text{B-CDF}_{D,K}(M_\alpha) - \text{B-CDF}_D(M_\alpha)| \ge \delta\right) \\
    &\quad + \mathbb{P}\left(|\text{B-CDF}_{D,K}(x^*) - \text{B-CDF}_D(x^*)| \ge \delta\right) \\[0.5em]
    &\le 4 \exp\left( - \frac{2 K \delta^2 W_{\min}^2}{W_{\max}^2} \right).
\end{align*}
As $\text{BP}_\alpha\text{-MPI}_{D,K} \in [0, 1]$ and $M_\alpha \in [0, 1]$, the absolute difference is strictly bounded by $1$. The finite-sample absolute expected bias is thus:
\begin{equation*}
    \left| \mathbb{E}[\text{BP}_\alpha\text{-MPI}_{D,K}] - M_\alpha \right| \le \min\left( 1, \; 4 \exp\left( - \frac{2 K \delta^2 W_{\min}^2}{W_{\max}^2} \right) \right),
\end{equation*}
as required. \qed

\clearpage
\section{Normalized Dataset} \label{sec:appendix_normalization}

\begin{lemma} \label{lem:equivalence}
Let $D = \{(\mathbf{p}^t, \mathbf{x}^t)\}_{t=1}^n$ be a dataset with positive expenditures $E_t = \mathbf{p}^t \cdot \mathbf{x}^t$. For any sequence of indices defining a $j$-cycle $C = (k_1, k_2, \dots, k_j, k_1)$, let $r = \frac{\max_{i \in \{1,\dots,j\}} E_{k_i}}{\min_{i \in \{1,\dots,j\}} E_{k_i}} \ge 1$ denote the maximum-to-minimum expenditure ratio across observations in cycle $C$, and let $X_{\max} = \max_{i \in \{1,\dots,j\}} X_{k_i,k_{i+1}}$. Then:
\begin{equation*}
    \left| \text{MPI}_D(C) - \text{MPI}_{\tilde{D}}(C) \right| \le X_{\max} \cdot \left( \frac{\sqrt{r} - 1}{\sqrt{r} + 1} \right).
\end{equation*}
\end{lemma}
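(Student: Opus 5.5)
The approach is to write both indices as weighted averages of the same edge costs $X_{k_i,k_{i+1}}$ and bound the difference of the weights. Since $\mathbf{p}^{k_i}\cdot(\mathbf{x}^{k_i}-\mathbf{x}^{k_{i+1}}) = E_{k_i}\bigl(1-\tilde{\mathbf{p}}^{k_i}\cdot\mathbf{x}^{k_{i+1}}\bigr) = E_{k_i}X_{k_i,k_{i+1}}$, the original index is
\begin{equation*}
\text{MPI}_D(C)=\sum_{i=1}^j w_i X_{k_i,k_{i+1}}, \qquad w_i=\frac{E_{k_i}}{\sum_{l=1}^j E_{k_l}}.
\end{equation*}
The normalized index $\text{MPI}_{\tilde D}(C)$ is the same sum with uniform weights $u_i=1/j$. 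Hence
\begin{equation*}
\text{MPI}_D(C)-\text{MPI}_{\tilde D}(C)=\sum_{i=1}^j (w_i-u_i)X_{k_i,k_{i+1}}.
\end{equation*}

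First I will reduce this to a total-variation bound. Since $\sum_i(w_i-u_i)=0$, the positive and negative parts of $w-u$ both have mass $\tau=\tfrac12\sum_i|w_i-u_i|$. Suppose the edge costs in the cycle lie in $[0,X_{\max}]$, which is the case for the strictly violating cycles the estimators aggregate over. Then the sum is at most $X_{\max}\tau$, attained when $X$ is put on the positive part, and at least $-X_{\max}\tau$. So $|\text{MPI}_D(C)-\text{MPI}_{\tilde D}(C)|\le X_{\max}\,\tau$.

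Second, I will maximize $\tau$ over all expenditure vectors with $\max E/\min E\le r$. After rescaling, $E_{k_i}\in[1,r]$. The total variation is a convex function of the vector $E$ (this is clearest after fixing which indices lie above the mean), so the maximum is attained at extreme points, where a fraction $t$ of the observations have $E=r$ and the rest have $E=1$. A direct computation then gives
\begin{equation*}
\tau(t)=t\Bigl(\frac{r}{1+t(r-1)}-1\Bigr)=\frac{t(1-t)(r-1)}{1+t(r-1)}.
\end{equation*}

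Third, I will maximize $\tau(t)$ over $t\in[0,1]$ by calculus. The first-order condition reduces to $(r-1)t^2+2t-1=0$, whose root in $[0,1]$ is $t^*=1/(1+\sqrt r)$. Substituting back gives $\tau(t^*)=(\sqrt r-1)/(\sqrt r+1)$. Restricting $t$ to multiples of $1/j$ can only lower the value, so the bound holds for every cycle length $j$. Combined with the first step, this yields the stated inequality.

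The main obstacle is the sign of the edge costs. The reduction in the first step needs $X_{k_i,k_{i+1}}\ge 0$, or more generally that all costs lie in an interval of length $X_{\max}$. For a general cycle some $X$ may be negative, as with $X_{1,3}=-0.5$ in the example. In that case I would either state the lemma for violating cycles with nonnegative edge costs, or replace $X_{\max}$ by the range $\max_i X_{k_i,k_{i+1}}-\min_i X_{k_i,k_{i+1}}$, using the shift invariance $\sum_i(w_i-u_i)c=0$ for any constant $c$. A secondary point is the extreme-point argument for $\tau$. If the convexity claim is awkward to make rigorous, I would instead fix the set $S$ of indices with $w_i>u_i$ and show that $\tau=\sum_{i\in S}w_i-|S|/j$ increases when $E_{k_i}$ is pushed to $r$ for $i\in S$ and to $1$ for $i\notin S$.
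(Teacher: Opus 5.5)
Your proof follows the paper's route: the same decomposition $\sum_i (w_i-1/j)X_{k_i,k_{i+1}}$, the same reduction to $X_{\max}\sum_{i:\,w_i>1/j}(w_i-1/j)$ using $X_{k_i,k_{i+1}}\in[0,X_{\max}]$, and the same two-level configuration giving $x(1-x)(r-1)/(x(r-1)+1)$. The paper also assumes $X_{k_i,k_{i+1}}\in[0,X_{\max}]$ without saying so, so your caveat about negative edge costs is warranted, and the paper bounds the function by reducing $f(x)>\frac{\sqrt r-1}{\sqrt r+1}$ to $0>(\sqrt r-1)[x(\sqrt r+1)-1]^2$ rather than by calculus. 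One correction: $\tau$ is not convex in $E$ (for $j=2$, $E_1=1$ it equals $\frac{E_2-1}{2(E_2+1)}$, which is concave on $[1,r]$), only quasi-convex; so rely on your fallback $\tau(E)=\max_S\bigl(\sum_{i\in S}w_i-|S|/j\bigr)$, where each term is increasing in $E_{k_i}$ for $i\in S$ and decreasing otherwise, which is valid and more rigorous than the paper's informal extreme-point step.
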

\noindent Lemma \ref{lem:equivalence} states that the maximum absolute difference between the MPI($C$) calculated on $D$ versus $\tilde{D}$ never exceeds $X_{\max} \cdot \left( \frac{\sqrt{r} - 1}{\sqrt{r} + 1} \right)$. 

\begin{proof}
The absolute difference between the two metrics can be expressed as:
\begin{equation*} 
    \left| \text{MPI}_D(C) - \text{MPI}_{\tilde{D}}(C) \right| = \left| \sum_{i=1}^j \left( W_i - \frac{1}{j} \right) X_i \right|,
\end{equation*}
where $X_i = X_{k_i, k_{i+1}}$ is the money pump cost on the $i$-th edge of the cycle, and $W_i = \frac{E_{k_i}}{\sum_{m=1}^j E_{k_m}}$ is the expenditure share of observation $k_i$ in the cycle. 

Since $X_i \in [0, X_{\max}]$, setting $X_i = X_{\max}$ for all elements with $W_i > 1/j$, and $X_i = 0$ yields:
\begin{equation*}
    \left| \text{MPI}_D(C) - \text{MPI}_{\tilde{D}}(C) \right| \le X_{\max} \sum_{i : W_i > 1/j} \left( W_i - \frac{1}{j} \right).
\end{equation*}

To find the upper bound, we need to find the highest value that $\sum_{i : W_i > 1/j} (W_i - 1/j)$ can take. Notice that each $W_i$ is strictly increasing in $E_{k_i}$ and strictly decreasing in all other expenditures $E_{k_m}$ ($m \neq i$). Thus, in any non-trivial case (i.e., different expenditures), each $W_i$ is maximized when its expenditure is the \textit{only} one at $E_{\max}$ while all others are at $E_{\min}$. Since we are maximizing a sum over these $W_i$ terms, there is a non-trivial trade-off between setting $E_{\max}$ for some observations and $E_{\min}$ for the others.\footnote{As $r$ increases, meaning the ratio between the highest and lowest expenditure increases, the optimal fraction (in terms of maximizing the summation) of observations set to $E_{\max}$ decreases. When $E_{\max}$ is vastly larger than $E_{\min}$, having multiple $E_{\max}$ observations inflates the shared denominator, making it optimal to preserve the weight on fewer high-expenditure observations.} 

To formalize this, suppose we distribute $q$ observations to $E_{\max}$ (where $0 < q < j$)\footnote{Otherwise all the expenditures would be identical.}, with the remaining $j-q$ observations set to $E_{\min}$. By definition, dividing the numerator and denominator by $E_{\min}$ and substituting $r = E_{\max}/E_{\min}$ yields the following:
\begin{equation*}
W_i = 
\begin{cases} 
\dfrac{E_{\max}}{q E_{\max} + (j-q)E_{\min}} = \dfrac{r}{q r + j - q} & \text{if } E_{k_i} = E_{\max}, \\[2em]
\dfrac{E_{\min}}{q E_{\max} + (j-q)E_{\min}} = \dfrac{1}{q r + j - q} & \text{if } E_{k_i} = E_{\min}.
\end{cases}
\end{equation*}
Since $r \ge 1$, it follows that when $E_{k_i} = E_{\min}$, $W_i \le 1/j$. As we only consider $W_i > 1/j$ when summing, we get:
\begin{equation*}
    \left| \text{MPI}_D(C) - \text{MPI}_{\tilde{D}}(C) \right| \le X_{\max} \left[ q \left( \frac{r}{q r + j - q} - \frac{1}{j} \right) \right].
\end{equation*}

Substituting $x = q/j \in [0, 1]$, the fraction of $E_{\max}$ observations in the cycle, into the RHS of the above equation allows us to express the upper bound as a continuous function of $x$ over the rationals such that:
\begin{align*}
    f(x) &= x j \left( \frac{r}{x j r + j - x j} - \frac{1}{j} \right) \\[0.85em]
         &= x \left( \frac{r}{x(r-1) + 1} - 1 \right) \\[0.85em]
         &= \frac{x(1-x)(r-1)}{x(r-1) + 1}.
\end{align*}

We establish that $f(x)$ is bounded by $\frac{\sqrt{r}-1}{\sqrt{r}+1}$ by contradiction. Suppose there exists some $x$ such that $f(x) > \frac{\sqrt{r}-1}{\sqrt{r}+1}$. This yields the strict inequality:
\begin{equation*}
    \frac{x(1-x)(r-1)}{x(r-1) + 1} > \frac{\sqrt{r}-1}{\sqrt{r}+1}.
\end{equation*}
Since $r \ge 1$ and $x \ge 0$, both denominators are strictly positive yielding:
\begin{align*}
    x(1-x)(r-1)(\sqrt{r}+1) &> (\sqrt{r}-1)(x(r-1) + 1), \\[0.85em]
    0 &> (\sqrt{r} - 1) \left[ x(\sqrt{r}+1) - 1 \right]^2.
\end{align*}
As $r \ge 1$, this is a contradiction. It directly follows that:
\begin{equation*}
    \left| \text{MPI}_D(C) - \text{MPI}_{\tilde{D}}(C) \right| \le X_{\max} \left( \frac{\sqrt{r}-1}{\sqrt{r}+1} \right).
\end{equation*}
\end{proof}

We estimate the bounds and distortion empirically using the Stanford Basket Dataset from our empirical application across all households exhibiting two-cycle violations ($n=395$).\footnote{Of the 396 GARP-violating households, exactly 395 exhibit two-cycle violations.} For robustness, we use the most conservative scenario to compute the bound from Lemma \ref{lem:equivalence}. In this way, across all two-cycle violations, for each household, we use the largest $r$ and $X_{\max}$. Table \ref{tab:normalization_stats} reports the related statistics across all 395 households.

\begin{table}[ht]
    \centering
    \footnotesize
    \setlength{\tabcolsep}{24pt}
    \caption{Summary statistics related to bounds and distortion from Lemma \ref{lem:equivalence}}
    \label{tab:normalization_stats}
    \vspace{0.5em}
    \begin{tabular}{l c c}
    \toprule
    \textbf{Metric} & Mean & Median \\
    \midrule
    $r$ & 1.1330 & 1.1010 \\
    $X_{\max}$ & 0.1331 & 0.1124 \\
    $\text{Bound} = X_{\max} \cdot \left( \frac{\sqrt{r}-1}{\sqrt{r}+1} \right)$ & 0.0046 & 0.0021 \\
    $|\text{MPI}_D - \text{MPI}_{\tilde{D}}|$ & 0.0032 & 0.0013 \\
    \bottomrule
\end{tabular}

    \vspace{0.4em}
    \begin{minipage}{0.75\textwidth}
    \scriptsize
    \textbf{Notes:} This table reports the mean and median of the related parameters, bound, and MPI distortion from Lemma \ref{lem:equivalence} across all $395$ households exhibiting two-cycle violations in the Stanford Basket Dataset, evaluated under the worst-case cycle for each household. For each household, $r$ is the maximum expenditure ratio across its two-cycle violations; $X_{\max}$ is the maximum violation severity across its cycles; $\text{Bound} = X_{\max} \cdot (\sqrt{r}-1)/(\sqrt{r}+1)$ is the maximum theoretical bound across its cycles; and $|\text{MPI}_D - \text{MPI}_{\tilde{D}}|$ is the maximum actual absolute difference across its cycles.
    \end{minipage}
\end{table}

As reported in Table \ref{tab:normalization_stats}, even under the most conservative scenario, the theoretical upper bound from Lemma \ref{lem:equivalence} remains low, averaging only $0.0046$ with a median of $0.0021$ (and a maximum of $0.0706$). In practice, the actual empirical distortion $|\text{MPI}_D - \text{MPI}_{\tilde{D}}|$ is even smaller, with a mean of $0.0032$, a median of $0.0013$, and a maximum of only $0.0332$ across the entire dataset. Figure \ref{fig:normalization_hist} illustrates the empirical distributions of both metrics, showing that the vast majority of the probability mass lies below $0.01$ (with over 88\% of theoretical bounds and nearly 92\% of actual distortions under $0.01$). Additionally, the distortion is concentrated even closer to zero than the conservative theoretical bound, confirming empirical accuracy using normalized datasets.

\begin{figure}[htbp]
    \centering
    \includegraphics[width=0.60\textwidth]{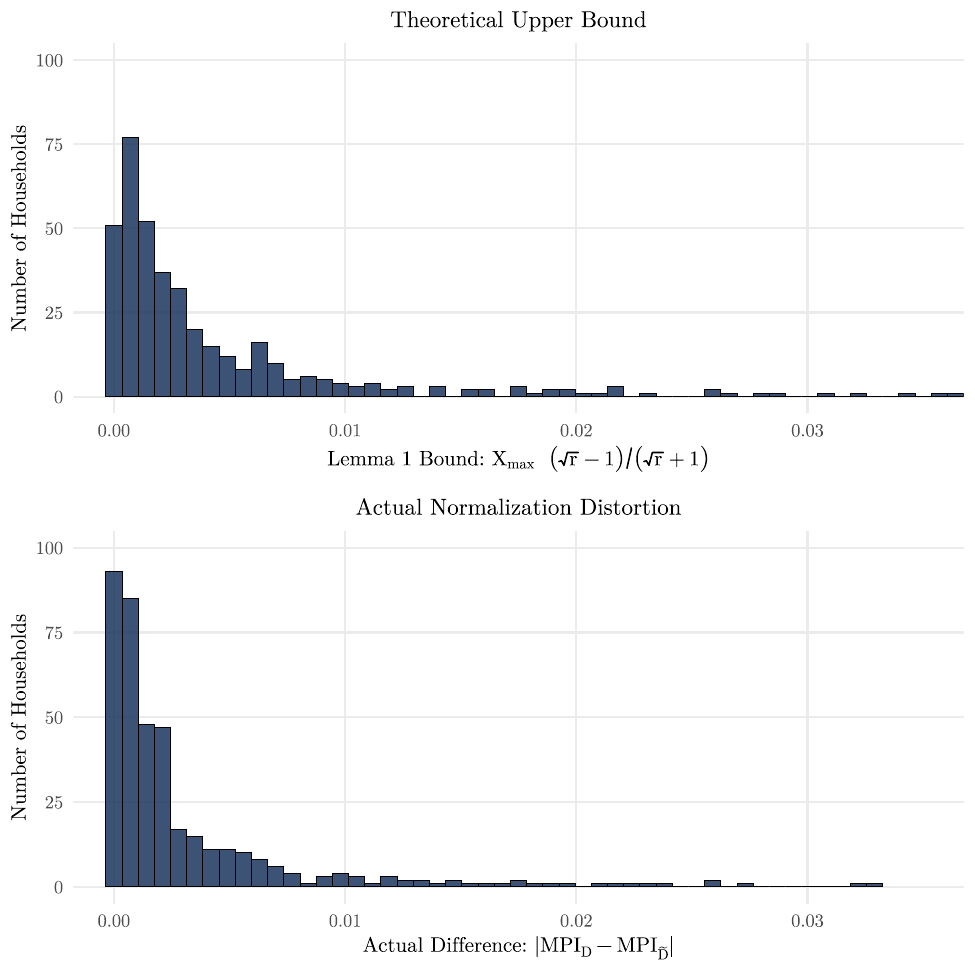}
    \caption{Histograms of the Lemma \ref{lem:equivalence} theoretical bound (top) and actual distortion $|\text{MPI}_D - \text{MPI}_{\tilde{D}}|$ (bottom) across Households ($n=395$)}
    \label{fig:normalization_hist}
\end{figure}

\clearpage
\bibliographystyle{apalike}
\bibliography{references}

\end{document}